\newcommand{\polylog}{{\rm polylog}}
\newcommand{\mw}{{\rm mw}}
\newcommand{\vc}{{\rm vc}}
\theoremstyle{definition}
\newtheorem{theorem}{Theorem}
\newtheorem{lemma}{Lemma}
\newtheorem{corollary}{Corollary}
\title{Finding a Maximum Minimal Separator:\\ Graph Classes and Fixed-Parameter Tractability\thanks{{This work is partially supported by JSPS KAKENHI Grant Number JP19K21537, JP20K19742, JP20H00595, JP18H05291, JP20K11692, and JST CREST JPMJCR1402.}}}
\author[1]{Tesshu Hanaka\thanks{\texttt{hanaka.91t@g.chuo-u.ac.jp}}}
\author[2]{Yasuaki Kobayashi\thanks{\texttt{kobayashi@iip.ist.i.kyoto-u.ac.jp}}}
\author[2]{Yusuke Kobayashi\thanks{\texttt{yusuke@kurims.kyoto-u.ac.jp }}}
\author[3]{Tsuyoshi Yagita\thanks{\texttt{yagita.tsuyoshi307@mail.kyutech.jp}}}
\affil[1]{Chuo University, Tokyo, Japan}
\affil[2]{Kyoto University, Kyoto, Japan}
\affil[3]{Kyushu Institute of Technology, Fukuoka, Japan}
\date{}
\begin{document}
\maketitle

\begin{abstract}
\noindent
We study the problem of finding a maximum cardinality minimal separator of a graph. This problem is known to be NP-hard even for bipartite graphs. In this paper, we strengthen this hardness by showing that for planar bipartite graphs, the problem remains NP-hard. Moreover, for co-bipartite graphs and for line graphs, the problem also remains NP-hard. On the positive side, we give an algorithm deciding whether an input graph has a minimal separator of size at least $k$ that runs in time $2^{O(k)}n^{O(1)}$. We further show that a subexponential parameterized algorithm does not exist unless the Exponential Time Hypothesis (ETH) fails. Finally, we discuss a lower bound for polynomial kernelizations of this problem.
    
    \begin{keywords}
    {Minimal Separator; Fixed-Parameter Tractability; Treewidth; NP-hardness}
    \end{keywords}

\end{abstract}

\section{Introduction}
Let $G = (V, E)$ be a graph and let $a, b \in V$ be distinct vertices.
We say that $S \subseteq V \setminus \{a, b\}$ is an {\em $a,b$-separator} of $G$ if there is no path between $a$ and $b$ in the graph obtained from $G$ by deleting every vertex in $S$ and its incidental edges. 
An $a, b$-separator $S$ is {\em minimal} if there is no $a, b$-separator that is a proper subset of $S$.
A {\em minimal separator} of $G$ is a minimal $a, b$-separator for some $a, b \in V$.

Dirac \cite{Dirac:1961} introduced the notion of minimal separators to characterize the class of chordal graphs.
This notion plays an indispensable role in computing treewidth and minimum fill-in, which are deeply related to minimal chordal completions of graphs.
In particular, if the number of minimal separators in a graph is polynomially bounded in the number of vertices, the treewidth and a minimum fill-in can be computed in polynomial time~\cite{Bouchitte:2002}.
Moreover, Fomin et al.~\cite{Fomin:2015} gave a polynomial-time algorithm for finding a largest induced subgraph of bounded treewidth satisfying some logical properties $\phi$ on this classes of graphs\footnote{More precisely, their algorithm runs in time $O(f(t, |\phi|)n^{t+4} {\tt \#pmc})$. Here, $f$ is a computable function, $|\phi|$ is the length of the formula $\phi$, $t$ and $n$ are the treewidth and the number of vertices of the input graph, respectively. ${\tt \#pmc}$ denotes the number of potential maximal cliques, which is upper bounded by a polynomial in $n$ plus the number of minimal separators.}.
Given this, it is important to know how many minimal separators graphs have.
Gaspers and Mackenzie \cite{Gaspers:2018} showed that every graph with $n$ vertices has $O^*(\tau^n)$ minimal separators, where $\tau = 1.618\cdots$ is the golden ratio\footnote{The notation $O^*$ suppresses a polynomial factor.}. 
They showed that there is a graph with $\Omega(1.4422^n)$ minimal separators.
Fomin et al. improved the upper bound $O^*(\tau^n)$ to $O^*(\tau^{\mw(G)})$, where $\mw(G)$ is the modular-width of $G$ \cite{Fomin:2018}, where $\mw(G) \le n$ for every graph $G$ with $n$ vertices.
They also showed that there are $O^*(3^{\vc(G)})$ minimal separators in any graph $G$, where $\vc(G)$ is the vertex cover number of $G$.
In \cite{Fomin:2008}, they proved that AT-free graphs have still an exponential number of minimal separators but the upper bound can be improved to $O^*(2^{n/2})$.

There are several graph classes having a polynomial number of minimal separators\footnote{We say that a graph class $\mathcal G$ has a polynomial number of minimal separators if there is a polynomial $p: \mathbb N \to \mathbb N$ such that for every $G \in \mathcal G$ with $n$ vertices, $G$ has at most $p(n)$ minimal separators.}, such as circle graphs, circular-arc graphs, co-comparability graphs of bounded dimension, weakly chordal graphs \cite{Kloks:1998,Bodlaender:1995,Bouchitte:2002}.
Milani\v{c} and Piva\v{c} \cite{Milanic:2019} also studied graph classes having a polynomial number of minimal separators with respect to forbidden induced subgraphs.
If the maximum size $k$ of minimal separators is bounded, we can obtain polynomial-time algorithms for the aforementioned problems, such as computing the treewidth and a minimum fill-in of graphs, since the number of minimal separators is obviously polynomial (i.e. $O(n^{k})$).

Even if graphs have only bounded size minimal separators, they are not treewidth-bounded (e.g. block graphs with unbounded clique numbers).
However, Skodinis~\cite{Skodinis:1999} showed that many graph problems, including {\sc Maximum Independent Set}, {\sc Minimum Feedback Vertex Set}, {\sc Minimum Fill-In}, {\sc Clique}, and {\sc Graph Coloring}, can be solved in polynomial time on this class of graphs\footnote{Also, {\sc Treewidth} can be solved in polynomial time on this class of graphs, while it was not mentioned in \cite{Skodinis:1999}.}. 
More precisely, if the maximum size of a minimal separator of an $n$-vertex graph is at most $k$, these problems can be solved in $f(k)n^{O(1)}$ time, that is, they are fixed-parameter tractable parameterized by the maximum size of a minimal separator.
Note that the algorithm of Fomin et al. \cite{Fomin:2015} also runs in polynomial time on such classes of graphs to particularly solve {\sc Maximum Independent Set} and {\sc Minimum Feedback Vertex Set}, but their running time depends on the number of potential maximal cliques, which yields a running time bound $n^{O(k)}$.

Motivated by these results, Hanaka et al. \cite{Hanaka:2019} studied the problem of finding a maximum cardinality minimal separator of graphs.
Formally, the problem is defined as follows.

\begin{screen}[4]
\begin{tabular}{l}
     \textsc{Maximum Minimal Separator}\\
     \textbf{Input}: A graph $G$ and a non-negative integer $k$.\\
     \textbf{Goal}: Determine if $G$ has a minimal separator of size at least $k$.
\end{tabular}
\end{screen}

Unfortunately, \textsc{Maximum Minimal Separator} is NP-hard even if the input graph is restricted to unweighted bipartite graphs~\cite{Hanaka:2019}. They also proved that the problem can be solved in polynomial time for bounded treewidth graphs.
Based on this tractable result, they claimed that there is a  $2^{p(k)}n^{O(1)}$-time algorithm for deciding whether the input graph has a minimal separator of size at least $k$, where $p(k)$ depends on the current best bound on the polynomial excluded grid theorem, which is $p(k) = O(k^{9}\polylog\ k)$ \cite{Chuzhoy:2019}.
However, the algorithm has a flaw.
Precisely speaking, they used the following approach, which is known as a win-win approach for treewidth.
By the excluded grid theorem, every graph has either small treewidth or a large grid as a minor.
If the treewidth of the input graph $G$ is at most $p(k)$, they proved that a largest minimal separator can be found efficiently by dynamic programming based on tree decompositions.
Otherwise, they claimed that $G$ has a $k \times k$ grid as a minor and then $G$ has a minimal separator of size at least $k$.
However, there are counterexamples to this claim: The complete graph of $k^2$ vertices has a $k\times k$ grid as a minor but has no minimal separators at all.
We can also construct infinitely many non-complete graphs that have huge treewidth but no large minimal separators.

\subsection{Our contribution}
In this paper,  we strengthen their hardness result and give an even faster correct algorithm for \textsc{Maximum Minimal Separator} than the previous claimed algorithm in \cite{Hanaka:2019}.

\begin{figure}
    \centering
    \begin{tikzpicture}[every node/.style={draw, rectangle,rounded corners}]\small
        \node[draw,rectangle,rounded corners,red]  (pl) at (0cm, 4cm) {planar};
        \node[draw,rectangle,rounded corners,align=center,text width=1.6cm,red] (bpl) at (0cm, 1.2cm) {bipartite $\cap$ planar [Thm. \ref{thm:NPH-planar}]};
        \node[draw,rectangle,rounded corners,red]  (bi) at (1.5cm, 2.5cm) {bipartite \cite{Hanaka:2019}};
        \node[draw,rectangle,rounded corners,blue]  (bp) at (1.5cm, 0cm) {bipartite permutation};
        \node[draw,rectangle,rounded corners,blue]  (pm) at (3.5cm, 0.9cm) {permutation};
        \node[draw,rectangle,rounded corners,red]  (comp) at (2cm, 4cm) {comparability};
        \node[draw,rectangle,rounded corners,blue]  (trap) at (4.7cm, 1.7cm) {$d$-trapezoid};
        \node[draw,rectangle,rounded corners,align=center,text width=2.4cm,blue]  (ccd) at (4.7cm, 3cm) {co-comparability of bounded dimension \cite{Bodlaender:1995}};
        \node[draw,rectangle,rounded corners,red]  (cc) at (4.7cm, 4.3cm) {co-comparability};
        \node[draw,rectangle,rounded corners,red]  (atf) at (4.7cm, 5.2cm) {AT-free};
        
        \node[draw,rectangle,rounded corners,blue]  (ch) at (8.2cm, 3.8cm) {chordal};
        \node[draw,rectangle,rounded corners,blue]  (int) at (7.2cm, 2.9cm) {interval};
        \node[draw,rectangle,rounded corners,align=center,blue]  (ca) at (7.2cm, 4.7cm) {circular-arc\\ \cite{Kloks:1998}};
        \node[draw,rectangle,rounded corners,align=center,text width=1.8cm,red] (cb) at (6.6cm, 0cm) {co-bipartite [Thm. \ref{thm:NPH-cobipartite}]};
        \node[draw,rectangle,rounded corners,align=center,blue]  (wch) at (9.2cm, 5cm) {weakly\\ chordal \cite{Bouchitte:2002}};
        \node[draw,rectangle,rounded corners,align=center,blue] (pig) at (8.5cm, 0cm) {proper\\ interval};
        \node[draw,rectangle,rounded corners,red]  (cf) at (8.5cm, 2cm) {claw-free};
        \node[draw,rectangle,rounded corners,align=center,red]  (line) at (10.3cm, 0cm) {line\\ {[Thm. \ref{thm:NPH-line}]}};
        \node[draw,rectangle,rounded corners,blue]  (cir) at (10.5cm, 4.2cm) {circle \cite{Kloks:1998}};
        \node[draw,rectangle,rounded corners,align=center,blue] (dh) at (10.3cm, 2.5cm) {distance\\ hereditary};
        
        \draw[->,>=stealth] (bi) -- (bpl);
        \draw[->,>=stealth] (pl) -- (bpl);
        \draw[->,>=stealth] (bi) -- (bp);
        \draw[->,>=stealth] (pm) -- (bp);
        \draw[->,>=stealth] (comp) -- (bi);
        \draw[->,>=stealth] (comp) -- (pm);
        \draw[->,>=stealth] (atf) -- (cc);
        \draw[->,>=stealth] (cc) -- (ccd);
        \draw[->,>=stealth] (ccd) -- (trap);
        \draw[->,>=stealth] (trap) -- (pm);
        \draw[->,>=stealth] (cc) [out=340] to (int);
        \draw[->,>=stealth] (ch) -- (int);
        \draw[->,>=stealth] (wch) -- (ch);
        \draw[->,>=stealth] (wch) -- (dh);
        \draw[->,>=stealth] (ca) -- (int);
        \draw[->,>=stealth] (cc) [out=340,in=97] to (cb); 
        \draw[->,>=stealth] (int) [out=270] to (pig);
        \draw[->,>=stealth] (cf) -- (pig);
        \draw[->,>=stealth] (cf) -- (line);
        \draw[->,>=stealth] (cf) -- (cb);
        \draw[->,>=stealth] (cir) -- (dh);
        \draw[->,>=stealth] (cir) [out=210,in=0] to (pm); 
        \draw[->,>=stealth] (cir) [out=210,in=47] to (pig); 
    \end{tikzpicture} 
    \caption{The figure depicts the hierarchy of classes of graphs. Directed arcs indicate the containment relation between two graph classes. Red rounded rectangles are classes for which {\sc Maximum Minimal Separator} is NP-hard and blue ones are those for which it is polynomial-time solvable.}
    \label{fig:complexity}
\end{figure}
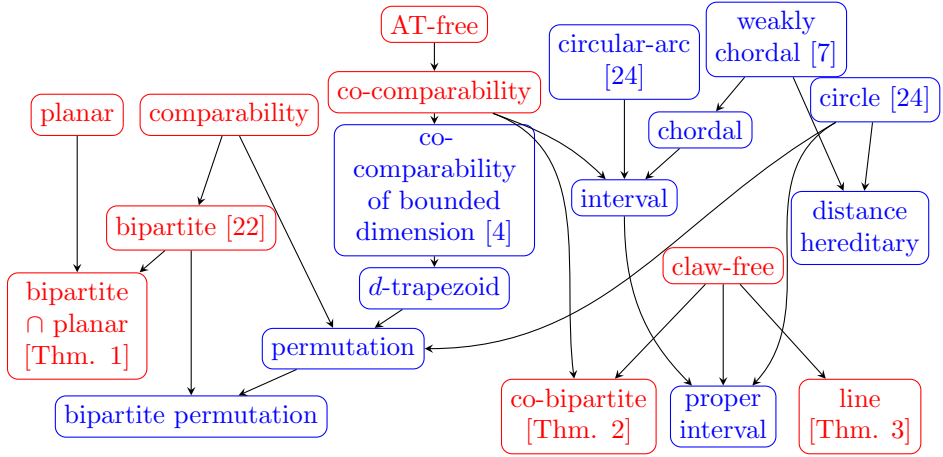

We first show that this problem is NP-complete even on several restricted graph classes.
\begin{theorem}\label{thm:NPH-planar}
     \textsc{Maximum Minimal Separator} is NP-complete even for subcubic planar bipartite graphs.
\end{theorem}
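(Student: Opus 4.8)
The plan is to prove NP-completeness in the usual two steps: membership in NP, and a polynomial-time reduction from a problem that is already NP-hard on planar instances of bounded degree. I would reduce from \textsc{Independent Set} on cubic planar graphs, which is classically known to be NP-hard (a \textsc{Planar 3-SAT}--based reduction would be an equally natural alternative, since it is easy to make SAT gadgets planar and of low degree).

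For membership in NP, a certificate is a triple $(S,a,b)$ with $S\subseteq V(G)\setminus\{a,b\}$ and $|S|\ge k$. To verify it in polynomial time I would compute the connected components of $G-S$, let $C_a$ and $C_b$ be those containing $a$ and $b$, and check the standard ``two full components'' characterization: $S$ is a minimal $a,b$-separator if and only if $C_a\ne C_b$ and $N_G(C_a)=N_G(C_b)=S$. (If every vertex of $S$ has a neighbour in both $C_a$ and $C_b$, then deleting any single vertex of $S$ reconnects $a$ and $b$; conversely, a vertex of $S$ absent from, say, $N_G(C_a)$ can be dropped from $S$ while $a$ and $b$ stay separated.)

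For the hardness, given a cubic planar graph $G=(V,E)$ and an integer $t$, I would construct in polynomial time a graph $H$ and an integer $k$ such that $G$ has an independent set of size at least $t$ if and only if $H$ has a minimal separator of size at least $k$, with $H$ planar, bipartite, and of maximum degree at most $3$. The construction would be gadget-based and laid out along a fixed planar embedding of $G$: two low-degree ``spines'' (long paths) $P_a$ and $P_b$ thread through the drawing so that each gadget can be attached to a private spine vertex, keeping all degrees at most $3$; for each vertex $v\in V$ a small planar bipartite \emph{selection gadget} is hung between the spines, designed so that its interaction with a minimal separator offers a binary choice (``$v$ selected'' / ``$v$ not selected''), with selecting $v$ contributing a fixed number $c$ of vertices to the separator and not selecting it contributing a strictly smaller fixed number $c'$; and for each edge $uv\in E$ a \emph{consistency gadget}, drawn along $uv$, ensures that selecting both $u$ and $v$ forces the resulting set either to fail to be a minimal separator or to be strictly smaller. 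Bipartiteness would be arranged by making every cycle in $H$ even (subdividing where necessary), planarity is inherited from the planar embedding of $G$ since every gadget is a planar piece glued locally, and the degree bound is maintained by design. I would then set $k := tc + (|V|-t)c' + (\text{fixed contribution of the spines and the edge gadgets})$. The forward direction is routine: an independent set $I$ with $|I|\ge t$ yields a minimal separator of the claimed size by selecting exactly $I$, the two full components being the two spines enlarged by the ``unselected'' parts of the gadgets, with independence of $I$ making every consistency gadget satisfied so that minimality holds. For the converse, from a minimal separator $S$ of size at least $k$ one reads off the selected set $I$; since $c>c'$, the size bound forces $|I|\ge t$, and the consistency gadgets force $I$ to be independent.

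The hard part will be the converse direction under all three structural restrictions simultaneously. Because the degree bound forbids a single high-degree ``source''/``sink'' vertex and forces the two endpoints of the separator to be realized through long paths, I expect the main obstacle to be ruling out \emph{spurious} large minimal separators of $H$ --- ones whose two full components are not the two intended sides, or ones that cut through a spine rather than through the gadgets. Establishing that every minimal separator of $H$ of size close to $k$ necessarily has the intended shape, so that it decodes to a genuine independent set of $G$, is where the careful tuning of the selection and consistency gadgets (in particular the gap $c>c'$ and the behaviour of ``fullness'' of the two components) will have to do its work.
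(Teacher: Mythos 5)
Your NP-membership argument is fine, but the hardness half is not a proof: it is a reduction template whose essential content --- the selection and consistency gadgets, the value of the constants $c,c'$, and above all the converse direction --- is left unspecified. You yourself identify the crux (ruling out ``spurious'' large minimal separators whose two full components are not the intended sides, or which cut through a spine), and nothing in the proposal resolves it. This is not a routine detail one can wave at: a minimal separator is a \emph{global} object, constrained by the requirement that the deleted set have two full components, so the contributions of your per-vertex gadgets are not independent --- whether a gadget's ``selected'' vertices can all lie in the separator depends on connectivity and fullness conditions that couple all gadgets at once. A budget argument of the form $k = tc+(|V|-t)c'+\mathrm{const}$ with $c>c'$ also has to survive separators that ignore the intended two-sided structure entirely (e.g.\ ones separating two pieces inside a single gadget), and with maximum degree $3$ you cannot use high-degree apex vertices to force the intended sides. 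Until concrete gadgets are exhibited and the decoding lemma is proved, the reduction from \textsc{Independent Set} is a conjecture, not an argument.

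For comparison, the paper avoids gadget engineering altogether by choosing a source problem whose structure already matches that of minimal separators: \textsc{Connected Maximum Cut} on planar cubic graphs (NP-hard by Haglin and Venkatesan). Subdividing every edge once yields a subcubic planar bipartite graph $G'$, and the correspondence is exact: a connected cut $(S,V\setminus S)$ of size $k$ gives the subdivision vertices of the cut edges as a minimal separator of size $k$ (the two sides are full components precisely because each cut edge has an endpoint on each side and both sides are connected), and conversely any minimal separator of $G'$ can be pushed onto subdivision vertices, after which its two full components project back to a connected cut of the same size. The global connectivity requirement of the source problem is exactly what supplies the ``two full components'' condition, which is the part your construction would have to manufacture by hand. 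If you want to salvage your route, you would need to either exhibit and verify the gadgets in full, or switch to a source problem (like connected max cut) whose feasibility condition is itself a two-sided connectivity constraint.
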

\begin{theorem}\label{thm:NPH-cobipartite}
    \textsc{Maximum Minimal Separator} is NP-complete even for co-bipartite graphs.
\end{theorem}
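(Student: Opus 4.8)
The plan is to reduce from \textsc{Maximum Induced Matching} --- given a graph and an integer $t$, decide whether the graph has an induced matching of size at least $t$ --- which is well known to be NP-hard. (Membership of \textsc{Maximum Minimal Separator} in NP is clear: a minimal separator is a polynomial-size certificate, and one checks in polynomial time that $G-S$ has at least two connected components whose neighbourhoods both equal $S$.) Given an instance $(G_0,t)$, I would first add one isolated vertex to $G_0$ to obtain $G'$ (this does not change the induced matching number and will be convenient later), write $n=|V(G')|$ and $m=|E(G')|=|E(G_0)|$, and then build the co-bipartite graph $G$ as follows: $V(G)=A\cup B$, where $A=\{a_v: v\in V(G')\}$ is turned into a clique, $B=\{b_e: e\in E(G')\}$ is turned into a clique, and for every incident pair we add the edge $a_vb_e$ whenever $v\in e$. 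Since $(A,B)$ partitions $V(G)$ into two cliques, $G$ is co-bipartite, and $|V(G)|=n+m$. The goal is to show that $G$ has a minimal separator of size at least $t+m$ if and only if $G'$ (equivalently $G_0$) has an induced matching of size at least $t$.

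The first real step is a structural description of the minimal separators of an \emph{arbitrary} co-bipartite graph $H$ with clique partition $(A,B)$. Using the standard fact that $S$ is a minimal separator exactly when $H-S$ has at least two full components (components whose neighbourhood is all of $S$), together with the observation that each of $A\setminus S$ and $B\setminus S$ lies in a single component of $H-S$, I would prove: $S$ is a minimal separator of $H$ if and only if $A\setminus S\neq\emptyset$, $B\setminus S\neq\emptyset$, there is no edge between $A\setminus S$ and $B\setminus S$ (so $H-S$ consists of exactly the two cliques $A\setminus S$ and $B\setminus S$), every vertex of $S\cap B$ has a neighbour in $A\setminus S$, and every vertex of $S\cap A$ has a neighbour in $B\setminus S$ (the last two conditions express that both components are full).

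Next I would specialise this to the constructed $G$. Encoding a candidate separator by $W=\{v: a_v\notin S\}$ and $F=\{e: b_e\notin S\}$, and setting $U:=V(G')\setminus W$, the "no edge between $A\setminus S$ and $B\setminus S$" condition says that no edge of $F$ meets $W$, i.e.\ $F\subseteq E(G'[U])$; fullness of the $B$-side component says every edge not in $F$ meets $W$, i.e.\ $F\supseteq E(G'[U])$; hence $F=E(G'[U])$. Fullness of the $A$-side component then says every vertex of $U$ is incident to an edge of $E(G'[U])$, i.e.\ $G'[U]$ has no isolated vertex; and since $G'$ has an isolated vertex, any such $U$ automatically satisfies $W\neq\emptyset$, while $B\setminus S\neq\emptyset$ amounts to $E(G'[U])\neq\emptyset$. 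Conversely, every nonempty $U\subseteq V(G')$ for which $G'[U]$ has no isolated vertex arises from a minimal separator $S$, whose size is
\[
 |S| \;=\; (n+m)-|W|-|F| \;=\; m + |U| - |E(G'[U])| .
\]

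What remains --- and this is the step I expect to be the main (though short) obstacle --- is the identity
\[
 \max\{\, |U|-|E(G'[U])| : U\neq\emptyset,\ G'[U]\text{ has no isolated vertex} \,\} \;=\; \mathrm{im}(G') ,
\]
where $\mathrm{im}(G')$ is the maximum size of an induced matching in $G'$. For "$\geq$", take $U$ to be the vertex set of a maximum induced matching $M$; then $G'[U]=M$, so $|U|-|E(G'[U])|=2|M|-|M|=\mathrm{im}(G')$. For "$\leq$", given a feasible $U$, let $c$ be the number of connected components of $G'[U]$; a spanning forest of $G'[U]$ gives $|E(G'[U])|\geq |U|-c$, hence $c\geq |U|-|E(G'[U])|$; since $G'[U]$ has no isolated vertex, each of its $c$ components has at least two vertices and therefore contains an edge, and choosing one edge per component yields an induced matching of $G'$ of size $c$ (it is induced because any $G'$-edge joining two of these components would itself lie in $G'[U]$ and merge them). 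Thus $\mathrm{im}(G')\geq c\geq |U|-|E(G'[U])|$. Combining the displayed facts, $G$ has a minimal separator of size at least $t+m$ if and only if $\mathrm{im}(G')\geq t$; together with NP-membership, this gives NP-completeness for co-bipartite graphs.
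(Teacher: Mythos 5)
Your proposal is correct, but it takes a genuinely different route from the paper. The paper reduces from \textsc{Minimum Independent Dominating Set} on bipartite graphs via two complementation steps: maximal independent sets of a bipartite graph become maximal bicliques of its bipartite complement, and after taking the (full) complement one gets a co-bipartite graph in which the minimal separators are exactly the complements of maximal bicliques meeting both sides; thus a maximum minimal separator corresponds to a minimum maximal biclique. You instead reduce from \textsc{Maximum Induced Matching} via a vertex-clique/edge-clique incidence construction, prove a clean structural characterization of \emph{all} minimal separators of a co-bipartite graph, specialize it to show that minimal separators correspond to nonempty sets $U$ with $G'[U]$ free of isolated vertices with $|S| = m + |U| - |E(G'[U])|$, and then prove $\max(|U|-|E(G'[U])|) = \mathrm{im}(G')$ by the spanning-forest/one-edge-per-component argument; the added isolated vertex neatly guarantees $A\setminus S\neq\emptyset$ in the converse direction. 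All steps check out (the trivial edge case where $G_0$ has no edges or $t=0$ is harmless, and one should cite the standard NP-hardness of induced matching, e.g.\ Stockmeyer--Vazirani). Two small remarks: in the sentence translating the fullness conditions you swapped the labels of the $A$-side and $B$-side components (the condition ``every edge not in $F$ meets $W$'' is fullness of $A\setminus S$, and ``no isolated vertex in $G'[U]$'' is fullness of $B\setminus S$), but since both conditions are stated and correctly translated this is only a labeling slip. As for what each approach buys: the paper's chain of reductions is reused verbatim in its ETH lower bound (3-SAT $\to$ dominating set $\to$ independent dominating set on bipartite graphs $\to$ co-bipartite separators), whereas your reduction is more self-contained and yields an explicit description of the whole minimal-separator lattice of the constructed graph together with an exact size formula; being linear in size, it could also support a $2^{o(k)}$-style lower bound provided one starts from a suitably tight hardness result for induced matching.
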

\begin{theorem}\label{thm:NPH-line}
    \textsc{Maximum Minimal Separator} is NP-complete even for line graphs.
\end{theorem}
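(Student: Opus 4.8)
The plan is to reduce everything to a statement about \emph{edge cuts} of the underlying graph. Membership in NP is inherited from the general problem: given a candidate set $S$ together with two vertices $a,b$, one checks in polynomial time that $a$ and $b$ lie in distinct components $C_a,C_b$ of $L(H)-S$ and that $N_{L(H)}(C_a)=N_{L(H)}(C_b)=S$, which is exactly the condition for $S$ to be a minimal $a,b$-separator. For hardness, I would first record a clean combinatorial description of the minimal separators of a line graph. If $H$ is connected and $Z\subseteq V(H)$ induces a connected subgraph with at least one edge, then the neighbourhood in $L(H)$ of (the edge set of) $H[Z]$ is precisely the edge boundary $\delta_H(Z)$. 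Using this I would prove the following \emph{Claim}: for connected $H$, a set $S$ is a minimal separator of $L(H)$ if and only if there is a partition $\{Z,V(H)\setminus Z\}$ of $V(H)$ such that both $H[Z]$ and $H[V(H)\setminus Z]$ are connected and contain an edge, and $S=\delta_H(Z)$; consequently the largest minimal separator of $L(H)$ has size equal to the largest \emph{bond} (inclusion-minimal edge cut) of $H$ whose two sides each contain an edge. The forward direction is the subtle part: given such an $S$, take two components $C_1\neq C_2$ of $L(H)-S=L(H-S)$ with $N(C_1)=N(C_2)=S$ and let $Z_1,Z_2$ be their vertex sets in $H$; one argues that $S$ contains no edge lying inside $Z_1$ or inside $Z_2$ (such an edge would lie in $N(C_1)=S$ but not in $N(C_2)$), whence $N(C_i)=\delta_H(Z_i)$, then $\delta_H(Z_1)=\delta_H(Z_2)$ forces every edge of $S$ to run between $Z_1$ and $Z_2$ and forces $Z_1\cup Z_2$ to have no edges leaving it, so by connectivity of $H$ we get $Z_1\cup Z_2=V(H)$. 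The converse direction is routine. The side condition ``each side contains an edge'' is exactly what excludes the degenerate non-examples (e.g.\ the clique of all edges at a single vertex, which is never a minimal separator).

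Given the Claim, Theorem~\ref{thm:NPH-line} reduces to showing that it is NP-hard to decide whether a connected graph admits a bond of size at least $k$ both of whose sides contain an edge. I would derive this from the NP-hardness of the \textsc{Maximum Bond} problem (equivalently, the maximum minimal edge cut problem), which in turn reduces from \textsc{Max Cut}. To pass from an arbitrary \textsc{Maximum Bond} instance $H_0$ to one in which every bond of interest automatically has both sides edge-nonempty, I would apply an edge-preserving ``vertex doubling'': replace each vertex $v$ of $H_0$ by an edge $v'v''$ and reattach every original edge at the primed copies, so each $v''$ becomes a pendant. In the resulting connected graph $H$, any bond whose two sides each contain an edge cannot separate $v'$ from $v''$ (otherwise $v''$ would be isolated in its side, forcing that side to be a single vertex with no edge), so these bonds are in size- and connectivity-preserving bijection with the bonds of $H_0$. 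By the Claim, the largest minimal separator of $L(H)$ therefore equals the largest bond of $H_0$, and $L(H)$ is a line graph computable from the \textsc{Max Cut} instance in polynomial time; choosing $k$ appropriately yields the reduction.

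The step I expect to be the real obstacle is the forward direction of the Claim: carefully ruling out all ``non-full'' behaviour of a minimal separator $S$ in $L(H)$ — that $S$ cannot contain edges inside a shore, and that the two shores together must exhaust $V(H)$ — so that minimal separators are pinned down to be exactly bonds with edge-nonempty sides. Once that correspondence is established, the remainder is bookkeeping together with an appeal to the NP-hardness of \textsc{Maximum Bond}, which is the genuine source of difficulty and which can either be cited or reproved by a direct reduction from \textsc{Max Cut}.
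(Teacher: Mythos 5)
Your proposal is correct and takes essentially the same route as the paper: the same characterization of minimal separators of a line graph as edge cuts whose two shores are connected and each contain an edge (the paper's Lemma~\ref{lem:line}, phrased there as non-trivial connected cuts), the same pendant-vertex gadget to rule out trivial shores, and the same source problem (largest bond / connected maximum cut, whose NP-hardness is cited). The only minor difference is that the paper excludes a third component of $L(H)-S$ via claw-freeness, whereas you derive $Z_1\cup Z_2=V(H)$ directly from $\delta_H(Z_1)=\delta_H(Z_2)=S$; both arguments are sound.
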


Since we can enumerate all the minimal separators in polynomial time per output \cite{Berry:2000},
the problem is obviously tractable for the classes of graphs having a polynomial number of minimal separators.
These hardness and tractable results give an almost complete picture of the complexity landscape of {\sc Maximum Minimal Separator} on several graph classes (See Figure~\ref{fig:complexity}).

As for an algorithmic result, we design a $2^{O(k)}n^{O(1)}$-time algorithm for \textsc{Maximum Minimal Separator}, which even improves the previous claimed running time $2^{k^{O(1)}}n^{O(1)}$ by Hanaka et al.~\cite{Hanaka:2019}.
The algorithm we propose is inspired by algorithms for constructing tree decompositions due to Bodlaender et al.~\cite{Bodlaender:1997} and Skodinis \cite{Skodinis:1999}.

\begin{theorem}\label{thm:fpt}
    There is an algorithm for \textsc{Maximum Minimal Separator} that runs in time $2^{O(k)}n^{O(1)}$.
\end{theorem}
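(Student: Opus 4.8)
The plan is to set up a win--win between a single-exponential dynamic program on bounded-treewidth graphs and a decomposition that isolates the clique-like parts of $G$, the latter being the essential obstruction since a clique has large treewidth but no minimal separator. First dispose of trivialities: if $G$ is disconnected with components $G_1,\dots,G_r$, then the minimal separators of $G$ are exactly those of the $G_i$ together with $\emptyset$ (and the latter only if $r\ge 2$), so for $k\ge 1$ we may run the algorithm on each component. The dynamic program is guided by the reformulation that $G$ has a minimal separator of size at least $k$ if and only if there are two vertex-disjoint nonempty connected sets $C_1,C_2\subseteq V$ with $N(C_1)=N(C_2)$ and $|N(C_1)|\ge k$, where $N(\cdot)$ denotes the external neighborhood: indeed $S$ is a minimal separator precisely when $G-S$ has at least two \emph{full} components (components $C$ with $N(C)=S$), and two disjoint connected sets with a common neighborhood are automatically distinct full components.

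For the structural step, compute in polynomial time the decomposition of $G$ along its clique minimal separators into atoms $A_1,\dots,A_m$ (the atoms being the inclusion-maximal induced subgraphs having no clique separator). If one of the clique separators used has size at least $k$, report YES. Otherwise we use the fact --- which requires a short argument --- that every minimal separator of $G$ is either one of these small clique separators or a minimal separator of a single atom, so it suffices to decide, for each atom $H=G[A_i]$, whether $H$ has a minimal separator of size at least $k$. Each atom is clique--separator-free, and the structural claim we rely on is that such a graph is either complete (and then has no minimal separator, contributing NO) or has treewidth $O(k)$ whenever all its minimal separators have size less than $k$. Concretely, run a $2^{O(k)}n^{O(1)}$-time constant-factor treewidth algorithm on $H$; if it outputs a tree decomposition of width $O(k)$, run the dynamic program below; if instead it certifies that $\tw(H)$ exceeds the relevant threshold, then, since $H$ is clique--separator-free, either $H$ is complete (contributing NO) or, by the claim, $H$ has a minimal separator of size at least $k$ (contributing YES).

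The dynamic program, given a width-$w$ tree decomposition of a graph $H$ with $w=O(k)$, searches for a map $f\colon V(H)\to\{s,1,2,r\}$ with $\sigma=f^{-1}(s)$, $C_1=f^{-1}(1)$, $C_2=f^{-1}(2)$ satisfying: no edge joins $C_1$ to $C_2$, $C_1$ to $f^{-1}(r)$, or $C_2$ to $f^{-1}(r)$; $C_1$ and $C_2$ are each nonempty and connected; every vertex of $\sigma$ has a neighbor in $C_1$ and a neighbor in $C_2$; and $|\sigma|\ge k$. Such an $f$ exists exactly when $H$ has a minimal separator $\sigma$ of size at least $k$, with $C_1,C_2$ two of its full components and the remaining components absorbed into $f^{-1}(r)$. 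In the table for a bag $X_t$ we store $f|_{X_t}$, for each vertex of $\sigma\cap X_t$ a flag recording whether it has already seen a neighbor in $C_1$ and a neighbor in $C_2$, and the value $\min(k,|\sigma|)$ so far; the only ingredient that is a priori not single-exponential is the connectivity information for the partial $C_1$ and $C_2$, which we keep within $2^{O(w)}$ by the cut-and-count technique (or the rank-based method, for a deterministic bound). Processing all $2^{O(w)}$ states per bag yields running time $2^{O(w)}n^{O(1)}=2^{O(k)}n^{O(1)}$.

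The main obstacle is the structural claim: a connected clique--separator-free graph that is not complete and all of whose minimal separators have size less than $k$ has treewidth $O(k)$. Via the Bouchitt\'e--Todinca framework, treewidth is governed by the sizes of potential maximal cliques, so this amounts to bounding $|\Omega|$ for every potential maximal clique $\Omega$ of such a graph. The delicate point is that $\Omega$ may decompose into a union of several small minimal separators together with an ``internal'' clique, and one must show that in the absence of a clique separator this configuration forces either a small bound on $|\Omega|$ or a minimal separator of size at least $k$ located elsewhere in the graph. The companion fact that minimal separators of $G$ split along the atoms of the clique--minimal-separator decomposition is comparatively routine but also has to be recorded.
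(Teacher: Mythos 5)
Your overall architecture (clique-separator/atom decomposition, the fact that minimal separators live inside atoms or are themselves the clique separators, and a $2^{O(w)}n^{O(1)}$ dynamic program on a width-$O(k)$ decomposition) is sound and close in spirit to the paper's, and your reformulation of minimal separators via two disjoint connected sets with common neighborhood is correct. But the proof has a genuine gap exactly where you flag it: the structural claim that a connected, non-complete, clique-separator-free graph all of whose minimal separators have size less than $k$ has treewidth $O(k)$ is not a known fact you may cite, and it is the entire content of the win--win; without it the algorithm cannot conclude ``YES'' when the treewidth test fails. Moreover, the route you sketch --- bounding $|\Omega|$ for \emph{every} potential maximal clique $\Omega$ --- is both more than is needed (it suffices to exhibit one minimal triangulation with small cliques) and not obviously attainable head-on, precisely because of the ``internal clique part'' of a potential maximal clique that you yourself point out; you give no argument that the absence of a clique separator tames it.

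The way this claim is actually established (and how the paper proves Theorem~\ref{thm:fpt}) is by a recursive separator-completion argument in the style of Bodlaender et al.\ and Skodinis: repeatedly pick an arbitrary minimal separator $S$ of the current graph $H$, recurse on the graphs $H(C,S)$ obtained by completing $S$ into a clique in each component $C$, and stop when a piece has fewer than $2k$ vertices or is complete. If every separator found has size less than $k$ and no large complete piece arises, the recursion itself assembles a tree decomposition of width at most $2k-2$ (Lemma~\ref{lem:clique-bag}). The crux is the analysis of a complete piece $K$ with $|K|\ge 2k$: if $K$ is not a clique of the original graph, two nonadjacent vertices $u,v\in K$ must have become adjacent to all of $K$ minus the last (small) separator, hence have at least $k$ common neighbors, so any minimal $u,v$-separator has size at least $k$ (Lemma~\ref{lem:large-clique}, transferred back via Lemma~\ref{lem:hereditary}); if $K$ is a clique of the original graph, the last separator used is a clique minimal separator of the original graph --- impossible inside an atom, or, in the paper's on-the-fly version, an invitation to split via Lemma~\ref{lem:cms}, with the $O(n)$ bound on clique minimal separators controlling the recursion. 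You need to supply this (or an equivalent) argument; once you do, the rest of your write-up --- precomputing atoms by Tarjan/Leimer rather than discovering clique separators on the fly, and reproving the bounded-treewidth DP via rank-based techniques instead of invoking Theorem~\ref{thm:tw-dp} --- is a legitimate, slightly different packaging of the same proof.
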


We also show that the asymptotic dependency of the exponential part in the running time cannot be drastically improved under the Exponential Time Hypothesis (ETH) \cite{Impagliazzo:2001}.

\begin{theorem}\label{thm:eth}
    Unless ETH fails, there is no $2^{o(k)}n^{O(1)}$-time algorithm for \textsc{Maximum Minimal Separator}.
\end{theorem}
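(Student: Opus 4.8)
The plan is to prove the lower bound by a \emph{parameter-linear} reduction from a problem that, under ETH, has no subexponential-time algorithm. Recall that ETH together with the Sparsification Lemma of Impagliazzo, Paturi and Zane~\cite{Impagliazzo:2001} implies that \textsc{3-SAT} on instances with $n$ variables and $m = O(n)$ clauses cannot be solved in time $2^{o(n)}$; equivalently one may start from \textsc{Vertex Cover} / \textsc{Independent Set} on general $N$-vertex graphs, which has no $2^{o(N)}$-time algorithm under ETH. It therefore suffices to give a polynomial-time reduction that maps an instance $I$ of such a problem, of total size $N$, to an instance $(G,k)$ of \textsc{Maximum Minimal Separator} with $|V(G)| = N^{O(1)}$ and, crucially, $k = O(N)$. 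Given this, the theorem follows by contraposition: an algorithm solving \textsc{Maximum Minimal Separator} in time $2^{o(k)}n^{O(1)}$ would, composed with the reduction, decide $I$ in time $2^{o(N)}N^{O(1)}$, contradicting ETH.

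To obtain the reduction I would revisit the NP-hardness constructions behind Theorems~\ref{thm:NPH-cobipartite} and~\ref{thm:NPH-line} rather than that of Theorem~\ref{thm:NPH-planar}: a reduction producing planar instances typically yields only a $2^{o(\sqrt{k})}$ bound because planar \textsc{Maximum Minimal Separator} plausibly admits a $2^{O(\sqrt{n})}$-type algorithm, so planarity is the wrong starting point. One then checks whether the gadgets used there are ``size-linear'', meaning that each variable/clause (or each vertex/edge) of $I$ contributes only $O(1)$ vertices to $G$ and only $O(1)$ to the target size $k$. If one of these reductions already has this property it can be reused verbatim; otherwise I would redesign the variable- and clause-gadgets so that a certifying minimal separator decomposes as a disjoint union of $O(N)$ constant-size local pieces, one per gadget, with the global $a,b$-separation requirement forcing exactly the local consistency constraints of $I$, and with the terminals $a,b$ reachable through a gadget precisely when it is ``unsatisfied''.

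The main obstacle is exactly this control of the parameter. It is easy to write down correct reductions whose natural certifying separator has size $\Theta(N\log N)$ or $\Theta(N^2)$ because of routing, selection, or amplification gadgets, and any such reduction only gives a weaker bound. The crux is to keep both directions of correctness \emph{and} linearity: in particular one must ensure that the ``padding'' vertices which are forced into every minimal $a,b$-separator --- those needed to block spurious $a,b$-paths and to guarantee inclusion-minimality --- total only $O(N)$, and that no gadget has to be duplicated a super-constant number of times. Once the construction is arranged so that the minimal separators of $G$ of size $\ge k$ are in correspondence with the ``yes''-certificates of $I$ via such a local decomposition, the ETH lower bound is immediate from the composition argument above.
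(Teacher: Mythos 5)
Your high-level strategy is the same as the paper's: compose the existing NP-hardness machinery (specifically the co-bipartite reduction, not the planar one) with a size-linear chain starting from \textsc{3-CNFSAT} under ETH, and conclude by contraposition. However, as written your argument has a genuine gap: the entire technical content of the theorem is the existence of a concrete parameter-linear chain, and you leave exactly that step open (``if one of these reductions already has this property it can be reused verbatim; otherwise I would redesign the variable- and clause-gadgets\ldots''). You never verify that the source problem of the co-bipartite reduction of Section~\ref{sec:cobipartite}, namely \textsc{Minimum Independent Dominating Set} on bipartite graphs, admits no $2^{o(N)}$-time algorithm under ETH, nor that the reduction to co-bipartite graphs preserves the instance size linearly. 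The paper closes precisely this gap: from a 3-CNF formula with $n$ variables and $m$ clauses one gets a \textsc{Dominating Set} instance with $O(n+m)$ vertices and edges (Lemma~\ref{lem:vc}); Corneil and Perl's transformation~\cite{Corneil:1984}, which replaces each edge by a path of five vertices, turns this into a bipartite \textsc{Independent Dominating Set} instance that is still of size $O(n+m)$; and the co-bipartite construction of Section~\ref{sec:cobipartite} only takes complements, so it does not change the vertex set at all. No gadget redesign is needed, and without exhibiting such a chain (or an equivalent one) your proposal does not yet prove the theorem.

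A second, smaller point: your insistence on separately controlling $k = O(N)$ (and your worry about certifying separators of size $\Theta(N\log N)$ or $\Theta(N^2)$) is unnecessary once the constructed graph has $O(N)$ vertices, because a minimal separator trivially satisfies $k \le n$; hence a $2^{o(k)}n^{O(1)}$-time algorithm is in particular a $2^{o(n)}n^{O(1)}$-time algorithm on linear-size instances, which is exactly how the paper argues. Parameter control becomes a separate issue only if you allow the reduction to blow the vertex count up polynomially, which the chain above never does.
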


It is well known that a parameterized problem is fixed-parameter tractable if and only if it admits a kernelization.
Given Theorem~\ref{thm:fpt}, we are interested in the size of a kernel for our problem, namely the polynomial kernelizability.
We, however, show that it is essentially impossible under some standard complexity-theoretic assumption.

\begin{theorem}\label{thm:kernel-general}
    {\sc Maximal Minimal Separator} does not admit a polynomial kernelization unless ${\rm NP} \subseteq {\rm coNP}/{\rm poly}$.
\end{theorem}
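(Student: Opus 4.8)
The plan is to rule out polynomial kernels by the standard composition technique: if a parameterized problem $Q$ admits an \emph{OR-composition} and its unparameterized version is NP-hard, then $Q$ has no polynomial kernelization (indeed no polynomial compression) unless ${\rm NP}\subseteq{\rm coNP}/{\rm poly}$. Recall that an OR-composition for $Q$ is a polynomial-time algorithm that, given instances $(x_1,k),\dots,(x_t,k)$ sharing the same parameter $k$, outputs a single instance $(y,k')$ with $k'$ bounded by a fixed polynomial in $k$ (independent of $t$) such that $(y,k')\in Q$ if and only if $(x_i,k)\in Q$ for some $i$. Since \textsc{Maximum Minimal Separator} is NP-complete (NP-hardness by, e.g., Theorem~\ref{thm:NPH-planar}; membership in NP is immediate, as a minimal $a,b$-separator is verifiable in polynomial time), it suffices to design such an OR-composition for it.

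The composition I would use is simply the disjoint union: given $(G_1,k),\dots,(G_t,k)$, output $(G_1\sqcup\cdots\sqcup G_t,\,k)$. We may assume $k\ge 1$: instances with $k=0$ merely ask whether $G$ is non-complete and are decidable in polynomial time, so they can be handled directly. Correctness rests on the following structural fact, which I would establish first: for $G=G_1\sqcup\cdots\sqcup G_t$ and any nonempty $S\subseteq V(G)$, the set $S$ is a minimal separator of $G$ if and only if $S$ is a minimal separator of one of the $G_i$'s. Indeed, if $S$ is a minimal $a,b$-separator of $G$ with $S\neq\emptyset$, then $a$ and $b$ lie in the same part $G_i$ (otherwise $\emptyset$ already separates them, contradicting minimality); every $a$-$b$ path of $G$ stays inside $G_i$, so $S\cap V(G_i)$ still separates $a$ from $b$, and minimality forces $S\subseteq V(G_i)$ and $S$ to be a minimal $a,b$-separator of $G_i$. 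Conversely, a minimal $a,b$-separator of some $G_i$ is readily seen to remain a minimal $a,b$-separator of $G$, since any $a$-$b$ path reappearing after deleting a proper subset lies inside $G_i$.

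Consequently the largest minimal separator of $G$ has size $\max_i s_i$, where $s_i$ is the size of the largest minimal separator of $G_i$ (with $s_i=0$ if $G_i$ is complete), so $(G,k)$ is a yes-instance precisely when some $(G_i,k)$ is. The output parameter equals $k$, which is trivially polynomial in $k$ and independent of $t$, and the construction runs in linear time. Hence this is a valid OR-composition, and the claimed kernelization lower bound follows. The argument has no serious obstacle; the only point requiring care is the disjoint-union characterization of minimal separators (and the minor $k=0$ edge case), after which the composition axioms are routine. Finally, since disjoint unions preserve planarity, bipartiteness, bounded maximum degree, and the property of being a line graph, the same argument combined with Theorems~\ref{thm:NPH-planar} and~\ref{thm:NPH-line} shows that \textsc{Maximum Minimal Separator} admits no polynomial kernel even when restricted to subcubic planar bipartite graphs or to line graphs, unless ${\rm NP}\subseteq{\rm coNP}/{\rm poly}$.
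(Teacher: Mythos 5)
Your proposal is correct and follows essentially the same route as the paper: the paper also rules out a polynomial kernel via the trivial OR-composition by disjoint union (observing that $G$ has a minimal separator of size at least $k$ iff some connected component does), combined with NP-hardness and the framework of Bodlaender et al. Your write-up is just a more detailed version of this, with the nonempty-separator/$k=0$ edge case handled explicitly.
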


Since {\sc Maximal Minimal Separator} is trivially OR-compositional, that is, $G$ has a minimal separator of size at least $k$ if and only if some connected component of $G$ has, Theorem~\ref{thm:kernel-general}.

If input graphs are restricted to be connected, some OR-compositional problems can have polynomial kernelizations (see \cite{Fomin:2019}).
However, our problem is unlikely to have a polynomial kernelization.

\begin{theorem}\label{thm:kernel-connected}
    {\sc Maximum Minimal Separator for Connected Graphs} does not admit a polynomial kernelization unless ${\rm NP} \subseteq {\rm coNP}/{\rm poly}$.
\end{theorem}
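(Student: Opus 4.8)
The plan is to rule out a polynomial kernelization by exhibiting an \emph{OR-cross-composition} into \textsc{Maximum Minimal Separator for Connected Graphs} parameterized by $k$ (the framework is described, e.g., in \cite{Fomin:2019}): it suffices to give, for some NP-hard problem $L$ and some polynomial equivalence relation $\mathcal R$ on instances of $L$, a polynomial-time algorithm mapping any $t$ instances $x_1,\dots,x_t$ of $L$ lying in one class of $\mathcal R$ to a single instance $(G^{*},k^{*})$ of our problem so that $(G^{*},k^{*})$ is a yes-instance iff some $x_i$ is a yes-instance, with $k^{*}$ bounded by a polynomial in $\max_i|x_i|+\log t$. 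As the source $L$ I would take \textsc{Maximum Minimal Separator} itself, which is NP-hard by Theorem~\ref{thm:NPH-planar} (already by \cite{Hanaka:2019}). The equivalence relation $\mathcal R$ declares $(G,k)$ and $(G',k')$ equivalent precisely when $|V(G)|=|V(G')|$ and $k=k'$, with all malformed inputs placed in one extra class; this has only polynomially many classes, and crucially all instances in a given class share the same target size $k$.

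So let $(G_1,k),\dots,(G_t,k)$ be instances in one class of $\mathcal R$. I would let $G^{*}$ be the disjoint union $G_1\cup\dots\cup G_t$ together with one fresh vertex $w$ adjacent to every other vertex; then $G^{*}$ is connected and is constructed in polynomial time, and I would output $(G^{*},k+1)$. Its parameter $k+1\le\max_i|V(G_i)|+1$ is well within the required bound (and, notably, independent of $t$). What remains is to prove that $G^{*}$ has a minimal separator of size at least $k+1$ if and only if some $G_i$ has a minimal separator of size at least $k$; we may assume $k\ge 1$, the case $k=0$ being trivial.

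Correctness hinges on a short structural description of the minimal separators of $G^{*}$. Since $w$ is universal, for any two vertices $a,b$ and any vertex set avoiding $w$ the two-edge path through $w$ survives, so every minimal separator of $G^{*}$ contains $w$. If $S=\{w\}\cup S'$ is a minimal $a,b$-separator with $a,b$ in the same $G_i$, then in $G^{*}-w$ the only relevant component is $G_i$, so $S'$ separates $a$ from $b$ in $G_i$, and minimality of $S$ forces $S'$ to be a \emph{minimal} $a,b$-separator of $G_i$; conversely, for any minimal $a,b$-separator $S'$ of $G_i$ the set $\{w\}\cup S'$ is a minimal $a,b$-separator of $G^{*}$. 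If instead $a$ and $b$ lie in different components of $G^{*}-w$, then $\{w\}$ already separates them and no proper superset is minimal, because deleting any vertex other than $w$ never reconnects $a$ and $b$. Hence the largest minimal separator of $G^{*}$ has size $1+\max_i(\text{largest minimal separator of }G_i)$ when that maximum is positive, and size $1$ otherwise, so $(G^{*},k+1)$ is a yes-instance exactly when some $(G_i,k)$ is. Feeding this into the cross-composition theorem yields the claimed lower bound under ${\rm NP}\not\subseteq{\rm coNP}/{\rm poly}$.

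The main obstacle is exactly this structural lemma: one must verify that attaching a universal vertex to enforce connectivity neither destroys any of the minimal separators we care about nor creates a spurious minimal separator of size exceeding $k+1$. The argument is elementary but needs care with the minimality condition — in particular that $\{w\}$ together with a non-minimal $G_i$-separator is not itself minimal, and that every "cross-component" separator collapses to the single vertex $w$. Everything else (NP-hardness of the source, the polynomial equivalence relation, and the polynomial running time of the construction) is routine.
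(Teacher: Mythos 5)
Your proposal is correct and matches the paper's own argument: the paper likewise attaches a universal vertex to the disjoint union of the instances and proves (Lemma~\ref{lem:con-composition}) that some $G_i$ has a minimal separator of size at least $k$ if and only if the composed connected graph has one of size at least $k+1$, then invokes the composition framework of Bodlaender et al. Your phrasing via OR-cross-composition with a polynomial equivalence relation (rather than a direct OR-composition of the parameterized problem into itself) and via minimal $a,b$-separators (rather than full components) is only a cosmetic difference.
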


\subsection{Paper organization}
The rest of this paper is organized as follows. 
Section \ref{sec:preli} is the preliminary section: we give some definitions and notations.
In Section \ref{sec:hardness}, we prove the NP-hardness of {\sc Maximal Minimal Separator} for several restricted graph classes.
In Section \ref{sec:fpt}, we give a fixed-parameter algorithm and lower bounds for {\sc Maximal Minimal Separator}. 
Section \ref{sec:conclusion} gives concluding remarks and future work.

\section{Preliminaries}\label{sec:preli}

Throughout the paper, graphs are simple and undirected.
Let $G = (V, E)$ be a graph. For $X \subseteq V$, we denote by $G[X]$ the subgraph induced by $X$.
For a vertex $v \in V$, the neighborhood of $v$ is denoted by $N(v)$, and for a vertex set $X \subseteq V$, we denote by $N(X) = \bigcup_{v \in X}N(v) \setminus X$. 

Let $S$ be a vertex set of $G$. A {\em full component associated to $S$} is a component $C$ of $G[V \setminus S]$ with $N(C) = S$.
The following folklore characterization of minimal separators is used throughout this paper.
\begin{lemma}[folklore]\label{lem:full-comp}
    A subset $S \subseteq V$ is a minimal separator if and only if there are at least two full components $C_1$ and $C_2$ associated to $S$.
    Moreover, if $S$ is a minimal separator, then $S$ is a minimal $a,b$-separator for every pair $a \in C_1$ and $b \in C_2$.
\end{lemma}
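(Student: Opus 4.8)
The plan is to prove both implications of the equivalence together with the ``moreover'' clause by direct arguments, exploiting the defining property $N(C) = S$ of a full component in both directions. No structural machinery is needed beyond the observation that the components of $G[V \setminus S]$ are precisely the maximal connected subsets of $V \setminus S$, and that they are disjoint from $S$.

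For the direction that two full components imply minimality (which simultaneously yields the ``moreover'' clause), I would start from full components $C_1, C_2$ associated to $S$ and fix arbitrary $a \in C_1$ and $b \in C_2$. Since $C_1$ and $C_2$ are distinct components of $G[V \setminus S]$ and $a, b \notin S$, there is no $a,b$-path avoiding $S$, so $S$ is an $a,b$-separator. To see minimality, I would take any proper subset $S' \subsetneq S$ together with a vertex $v \in S \setminus S'$. The key observation is that, because $N(C_1) = S \ni v$ and $N(C_2) = S \ni v$, the vertex $v$ has a neighbor in $C_1$ and a neighbor in $C_2$. As $C_1$ and $C_2$ are connected and disjoint from $S \supseteq S'$, in $G[V \setminus S']$ the vertex $v$ is joined to all of $C_1$ and all of $C_2$; in particular $a$ and $b$ then lie in a common component of $G[V \setminus S']$. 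Hence no proper subset of $S$ separates $a$ from $b$, so $S$ is a minimal $a,b$-separator. This establishes the ``if'' direction and, since $a$ and $b$ were arbitrary, the ``moreover'' statement as well.

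For the converse, I would assume $S$ is a minimal separator, i.e.\ a minimal $a,b$-separator for some $a, b$, and let $C_a$ and $C_b$ be the components of $G[V \setminus S]$ containing $a$ and $b$ respectively; these are distinct because $S$ separates $a$ and $b$. It suffices to show each is full. The inclusion $N(C_a) \subseteq S$ is immediate. For the reverse inclusion I would argue by contradiction: suppose some $v \in S \setminus N(C_a)$ existed, and set $S' = S \setminus \{v\}$. Since $v$ has no neighbor in $C_a$, passing from $G[V \setminus S]$ to $G[V \setminus S']$ adds no edge incident to $C_a$, so $C_a$ remains a connected component of $G[V \setminus S']$ that still contains $a$ but not $b$. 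Thus $S'$ is an $a,b$-separator properly contained in $S$, contradicting minimality. Hence $N(C_a) = S$, and symmetrically $N(C_b) = S$, so $C_a$ and $C_b$ are two full components associated to $S$.

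The main obstacle I anticipate is the bookkeeping in the converse: one must verify carefully that deleting a single vertex $v \notin N(C_a)$ from $S$ cannot merge $C_a$ with the component of $b$. This hinges on the fact that the only edges present in $G[V \setminus S']$ but absent in $G[V \setminus S]$ are those incident to $v$, none of which touch $C_a$, so $C_a$ retains both its connectivity and its isolation from the rest of $G[V \setminus S']$. Beyond this point, the remaining reasoning is routine.
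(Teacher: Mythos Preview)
The paper states Lemma~\ref{lem:full-comp} as folklore and provides no proof, so there is nothing to compare against. Your argument is correct and is precisely the standard proof of this classical characterization: both directions and the ``moreover'' clause are handled cleanly, and the bookkeeping you flag in the converse (that removing $v \in S \setminus N(C_a)$ leaves $C_a$ a component of $G[V \setminus S']$) is justified.
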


In particular, we can check if a subset $S \subseteq V$ is a minimal separator of $G$ in linear time.

For $S \subseteq V$, the pair $(S, V \setminus S)$ is called a {\em cut} of $G$.
The set of edges between $S$ and $V\setminus S$ is called the {\em cutset} of $(S, V\setminus S)$.
The size of a cut is defined as the number of edges of its cutset.
When both $G[S]$ and $G[V\setminus S]$ are connected, we say that $(S, V\setminus S)$ is connected.

\section{Hardness on graph classes}\label{sec:hardness}
\subsection{Planar bipartite graphs}\label{sec:planar}
In this section, we prove Theorem~\ref{thm:NPH-planar} by performing a polynomial-time reduction from the connected maximum cut problem on planar graphs, which is known to be NP-complete \cite{Haglin:1991}.
In this problem, given a graph $G = (V, E)$, the objective is to find a bipartition $(S, V\setminus S)$ of the vertex set $V$ maximizing the number of edges between $S$ and $V \setminus S$ subject to a connectivity requirement: both $G[S]$ and $G[V\setminus S]$ must be connected.
If we drop the connectivity requirement, the problem coincides with the well-known maximum cut problem.
Haglin and Venkatesan proved that this problem is NP-hard even for planar cubic graphs~\cite{Haglin:1991}, whereas the original maximum cut problem can be solved in polynomial time on planar graphs~\cite{Hadlock:1975}.

Let $G = (V, E)$ be a planar cubic graph. We subdivide each edge $e$ once by introducing a new vertex $v_e$, and we let $G'$ be the resulting graph. 
Clearly, $G'$ is bipartite and planar. 
\begin{lemma}\label{lem:planar-bipartite}
    $G$ has a connected cut of size at least $k$ if and only if $G'$ has a minimal separator of size at least $k$.    
\end{lemma}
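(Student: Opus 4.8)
The plan is to set up a correspondence between connected cuts of $G$ and minimal separators of $G'$ via the subdivision vertices. Recall $G'$ is obtained from the cubic graph $G$ by subdividing every edge $e = uv$ once with a new vertex $v_e$, whose only neighbors in $G'$ are $u$ and $v$. I would first establish the forward direction: given a connected cut $(S, V \setminus S)$ of $G$ with cutset $F$, let $S' = \{v_e : e \in F\}$ be the set of subdivision vertices of the cut edges; I claim $S'$ is a minimal separator of $G'$ of size $|F| \ge k$. To see this, observe that deleting $S'$ from $G'$ leaves exactly two parts: one consisting of $S$ together with the subdivision vertices of edges internal to $S$, and one consisting of $V \setminus S$ together with the subdivision vertices of edges internal to $V \setminus S$. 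Since $G[S]$ and $G[V \setminus S]$ are connected, each of these two parts induces a connected subgraph of $G' - S'$ (internal subdivision vertices just sit on edges within one side). Moreover every $v_e \in S'$ has one neighbor $u \in S$ in the first part and one neighbor $v \in V \setminus S$ in the second part, so both parts are full components associated to $S'$. By Lemma~\ref{lem:full-comp}, $S'$ is a minimal separator, and $|S'| = |F| \ge k$.

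For the converse, suppose $G'$ has a minimal separator $S'$ of size at least $k$, with two full components $C_1, C_2$ associated to $S'$ (Lemma~\ref{lem:full-comp}). The key structural observation is that we may assume $S'$ consists only of subdivision vertices. Indeed, if some original vertex $u \in V$ lies in $S'$, then since $\deg_{G'}(u) = 3$ and $u$ must have a neighbor in $C_1$ and a neighbor in $C_2$ (as $N(C_1) = N(C_2) = S'$), and each such neighbor is a subdivision vertex of degree $2$, a short case analysis (using that $G$ is cubic, so $u$ has exactly three incident subdivision vertices) lets me replace $u$ in $S'$ by one of its neighboring subdivision vertices without decreasing $|S'|$ and while still having two full components — essentially sliding the separator off the original vertices onto the edges. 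After this cleanup, write $S' = \{v_e : e \in F\}$ for some edge set $F \subseteq E$. Define $S$ to be the vertex set of $C_1$ restricted to $V$ (i.e. the original vertices in $C_1$); then $V \setminus S$ is the set of original vertices in $C_2$, because every original vertex not in $S'$ lies in $C_1$ or $C_2$, and no original vertex lies in $S'$. One checks that $F$ is precisely the cutset of $(S, V \setminus S)$: an edge $e = uv$ of $G$ has $v_e \in S'$ iff $u$ and $v$ lie on opposite sides. Connectivity of $G[S]$ and $G[V\setminus S]$ follows from connectivity of $C_1$ and $C_2$ in $G' - S'$, since a path in $C_i$ between two original vertices passes only through subdivision vertices of edges internal to that side, which contract down to a path in $G$. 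Hence $(S, V\setminus S)$ is a connected cut of size $|F| = |S'| \ge k$.

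The main obstacle I expect is the cleanup step in the converse direction: showing that an arbitrary minimal separator can be assumed to avoid the original vertices of $G$. This requires carefully exploiting that $G$ is cubic (so each original vertex has degree exactly $3$ in $G'$) and the full-component condition $N(C_i) = S'$ — an original vertex $u \in S'$ has its three neighbors distributed among $C_1$, $C_2$, and possibly $S'$ itself, and I need to argue each configuration can be locally modified. I should also double-check the edge case where a full component $C_i$ is a single subdivision vertex, and verify that after the modification the resulting set still has two full components rather than merging them. Everything else is routine bookkeeping about subdivisions and connectivity, so I would present those parts briskly.
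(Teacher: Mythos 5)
Your plan coincides with the paper's proof: subdivision vertices of the cut edges for the forward direction, and ``sliding'' a minimal separator off the original vertices for the converse. The forward direction as you write it is complete. The converse, however, has a genuine gap at exactly the point you flag yourself: you assert that an original vertex $u\in S'$ can be replaced by a neighbouring subdivision vertex ``by a short case analysis'' without decreasing $|S'|$ and while keeping two full components, but you never give that analysis --- and it is the entire substance of this direction. It can be closed as follows (this is what the paper does). First, no neighbour of an original vertex $v\in S'$ can itself lie in $S'$: such a neighbour is a subdivision vertex of degree two, and by Lemma~\ref{lem:full-comp} every vertex of a minimal separator must have a neighbour in each of the two full components $C_1,C_2$, which a degree-two vertex with one neighbour inside $S'$ cannot have. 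Hence all three neighbours of $v$ avoid $S'$, and since $v$ has a neighbour in each of $C_1$ and $C_2$, after swapping names we may assume $v$ has exactly one neighbour $v_e$ in $C_1$. Then $(S'\setminus\{v\})\cup\{v_e\}$ has the same cardinality and again has two full components, namely $C_1\setminus\{v_e\}$ and the component containing $C_2\cup\{v\}$ (each vertex of $S'\setminus\{v\}$ keeps its neighbour in $C_2$, and $v_e$ is adjacent to $v$). Iterating removes all original vertices from $S'$. Some such argument must appear; without it the converse is not proved.

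Two smaller omissions. Your claim that every original vertex outside $S'$ lies in $C_1\cup C_2$ (and hence that the edges $e$ with $v_e\in S'$ are exactly the cut edges of $(C_1\cap V,\,C_2\cap V)$) silently uses that, after the cleanup, $G'-S'$ has no component other than $C_1$ and $C_2$. This does hold, but needs a sentence: each $v_e\in S'$ has degree two and one neighbour in each of $C_1$ and $C_2$, so no vertex of $S'$ is adjacent to a putative third component, which would therefore be disconnected from the rest of $G'$. As for the edge case you worry about ($C_1$ a single subdivision vertex $v_e$): it forces $S'=N(v_e)$ to be the two endpoints of one edge, so $|S'|=2$, and the replacement indeed breaks down there; it is harmless, though, because a connected cubic graph always has a connected cut of size $3$ (split off a leaf of a spanning tree), so the equivalence at thresholds $k\le 3$ holds for trivial reasons --- but if you invoke that, say it explicitly.
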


\begin{proof}
Suppose first that $G$ has a connected cut $(S, V\setminus S)$ of size at least $k$.
Let $F$ be the cutset of $(S, V\setminus S)$ and
let $U = \{v_e : e \in F\}$.
Since $U$ separates $S$ and $V\setminus S$ in $G'$,  every vertex in $U$ is adjacent to vertices both in $S$ and in $V\setminus S$.
Therefore $U$ is a minimal separator of $G'$ of size at least $k$.

Conversely, let $U$ be a minimal separator of $G'$. We can assume that $U$ contains only vertices $v_e$'s that are newly introduced when subdividing each edge $e$.
To see this, consider a vertex $v \in U$ that is the original vertex in $G$. By the minimality of $U$, no neighbor of $v$ is in $U$.
Since $U$ is a minimal separator of $G'$, there are two full components $C_1$ and $C_2$ associated to $U$. As $v$ has degree three, we can assume that $v$ has exactly one neighbor $v_e$ in $C_1$ and at least one neighbor in $C_2$. Observe that there are two full components associated to $U \setminus \{v\} \cup \{v_e\}$: one is $C_1 \setminus \{v_e\}$ and the other one is the component $C$ containing $C_2$ and $v$. Note that $C$ is indeed a full component associated to $U \setminus \{v\} \cup \{v_e\}$ since $N_{G'}(C_2) = U$ and $v$ is adjacent to $v_e$. 
By repeatedly applying this to a minimal separator $U$, $U$ contains only newly introduced vertices $v_e$'s when subdividing edges.
Since each vertex of $U$ has degree two, there are only two full components $C_1$ and $C_2$ associated to $U$ and there are no components other than two. 
Now, we take an edge $e$ for each $v_e \in U$ and the edges taken here form the set of edges between $C_1 \cap V$ and $C_2 \cap V$ in $G$.
As $C_1 \cap V$ and $C_2 \cap V$ are connected, the lemma holds.
\end{proof}

The construction of $G'$ can be done in polynomial time and hence Theorem~\ref{thm:NPH-planar} follows.

\subsection{Co-bipartite graphs}\label{sec:cobipartite}
A graph $G$ is {\em co-bipartite} if the vertex set of $G$ can be partitioned into two subsets $A$ and $B$ so that both $G[A]$ and $G[B]$ induce cliques.
In other words, a graph is co-bipartite if and only if its complement is a bipartite graph.
In this section, we prove Theorem~\ref{thm:NPH-cobipartite}. 

Here, we prove that the problem of finding a maximum cardinality minimal separator is NP-hard even on co-bipartite graphs by giving a reduction from the minimum independent dominating set problem (equivalently, the minimum maximal independent set problem) on bipartite graphs, which is known to be NP-hard~\cite{Corneil:1984,Liu:2015}. 
In this problem, given a bipartite graph $G = (A \cup B, E)$ with bipartition $(A, B)$ of the vertex set, the goal is to find a minimum cardinality maximal independent set of $G$.
Our reduction consists of several types of ``complement'' operations on graphs and solutions.

From an instance $G = (A \cup B, E)$ of the minimum independent domination set problem, we take the bipartite complement $G' = (A \cup B, (A \times B) \setminus E)$. It is easy to see that $U \subseteq A \cup B$ is a maximal independent set in $G$ if and only if $U$ is a maximal biclique in $G'$. Therefore, we have the following intermediate result.
\begin{lemma}\label{lem:mmbiclique}
    The problem of finding a minimum cardinality maximal biclique is NP-hard on bipartite graphs.
\end{lemma}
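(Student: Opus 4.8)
The plan is to prove the lemma by the bipartite-complement reduction already introduced just above the statement. Given a bipartite graph $G = (A \cup B, E)$, we form $G' = (A \cup B, (A\times B)\setminus E)$, which is again bipartite and built in polynomial time, and we show that the maximal independent sets of $G$ and the maximal bicliques of $G'$ are \emph{the same family of vertex subsets}. Since this correspondence is trivially cardinality-preserving and since finding a minimum-cardinality maximal independent set (equivalently, a minimum independent dominating set) is NP-hard on bipartite graphs~\cite{Corneil:1984,Liu:2015}, the minimum maximal biclique problem inherits the hardness.

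The verification I would carry out splits into two steps. First, for a subset $U = U_A \cup U_B$ with $U_A \subseteq A$ and $U_B \subseteq B$, the set $U$ is independent in $G$ iff $ab \notin E$ for all $a \in U_A$, $b \in U_B$, i.e. iff $ab \in (A\times B)\setminus E$ for all such pairs, i.e. iff $U$ induces a complete bipartite subgraph of $G'$ --- a biclique of $G'$. Second, for maximality, consider a vertex $v \notin U$, say $v \in A$: then $U \cup \{v\}$ is still a biclique of $G'$ exactly when $v$ is $G'$-adjacent to every vertex of $U_B$, which by the construction of $G'$ means $v$ has no $G$-neighbor in $U_B$; the case $v \in B$ is symmetric. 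Hence $U$ is inclusion-maximal as a biclique of $G'$ iff every vertex outside $U$ has a $G$-neighbor in $U$, i.e. iff $U$ is a maximal independent set of $G$. Combining the two steps yields the claimed identity of families, and the lemma follows.

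I do not expect a genuine obstacle here --- the content is the routine two-step check above --- but the point that needs to be stated carefully is the treatment of \emph{degenerate} bicliques in which one side is empty. A one-sided maximal independent set of $G$ (which, since $A$ and $B$ are themselves independent, can only be all of $A$ or all of $B$) corresponds to a biclique of $G'$ with an empty part, so we should adopt the convention that a vertex set inducing a complete bipartite subgraph counts as a biclique even when one of its two sides is empty; with this convention the correspondence is exact. One should likewise make explicit that ``maximal biclique'' is meant in the inclusion-maximal sense among such vertex sets, so that no further case analysis is needed.
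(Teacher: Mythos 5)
Your proof is correct and follows exactly the paper's argument: the paper also reduces from minimum independent dominating set on bipartite graphs via the bipartite complement $G' = (A \cup B, (A\times B)\setminus E)$, asserting (without the detailed check you supply) that maximal independent sets of $G$ are precisely the maximal bicliques of $G'$. Your explicit handling of the degenerate one-sided case is a reasonable extra precaution but does not change the approach.
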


Let $G'' = (A \cup B, \overline{E})$ be the co-bipartite graph that is the complement graph of $G'$: $G''$ contains two cliques induced by $A$ and $B$ and there is an edge between $a \in A$ and $b \in B$ in $G''$ if and only if $a$ is {\em not} adjacent to $b$ in $G'$.
In the following, we assume that $G'$ has no $v \in A$ with $N(v) = B$ and no $v \in B$ with $N(v) = A$ since such a vertex always belongs to any maximal biclique of $G'$.
We also assume that $G'$ has no isolated vertices.

\begin{lemma}\label{lem:co-bipartite}
    Let $U \subseteq A \cup B$.
    Then $U$ is a maximal biclique of $G'$ that contains at least one vertex from $A$ and one from $B$ if and only if $(A \cup B) \setminus U$ is a minimal separator in $G''$.
\end{lemma}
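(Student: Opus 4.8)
The plan is to translate the "maximal biclique" condition in $G'$ into the "minimal separator" condition in $G''$ by unwinding the two complement operations, using Lemma~\ref{lem:full-comp} as the working characterization of minimal separators. Write $S = (A \cup B) \setminus U$, so that $U = A' \cup B'$ with $A' = U \cap A$ and $B' = U \cap B$, and $S = (A \setminus A') \cup (B \setminus B')$. Note first that in $G''$, because $A$ and $B$ each induce cliques, the components of $G''[A \cup B \setminus S] = G''[U]$ are easy to understand: $G''[A']$ and $G''[B']$ are each cliques (hence connected), and the only possible edges between them in $G''$ are exactly the non-edges of $G'$ between $A'$ and $B'$. So $G''[U]$ is connected precisely when there is at least one $G'$-non-edge across $A'$ and $B'$, and it splits into the two cliques $A'$ and $B'$ (as two separate components) precisely when $A'$ and $B'$ form a complete bipartite graph in $G'$, i.e.\ when $U$ is a biclique of $G'$ containing vertices from both sides.

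Granting that dichotomy, the forward direction runs as follows. Suppose $U$ is a maximal biclique of $G'$ with $A' \neq \emptyset \neq B'$. Then, as just noted, $G''[U]$ has exactly two components, $C_1 = A'$ and $C_2 = B'$. I claim each is a full component associated to $S$, i.e.\ $N_{G''}(A') = S$ and $N_{G''}(B') = S$; Lemma~\ref{lem:full-comp} then gives that $S$ is a minimal separator. Since $A \setminus A' \subseteq N_{G''}(A')$ trivially (cliques on $A$), it remains to show every $b \in B \setminus B'$ has a $G''$-neighbour in $A'$, equivalently a $G'$-non-neighbour in $A'$. If not, $b$ would be $G'$-adjacent to all of $A'$; combined with the assumption that $b$ is $G'$-adjacent to all of $B'$? — no, $B'$ is the wrong side — rather: $U \cup \{b\} = A' \cup (B' \cup \{b\})$ would still be a biclique of $G'$ (we need $b$ adjacent in $G'$ to all of $A'$, which is exactly the failure hypothesis, and there are no $G'$-edges within $B' \cup \{b\}$ since $G'$ is bipartite), contradicting maximality of $U$. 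Symmetrically every $a \in A \setminus A'$ has a $G''$-neighbour in $B'$. Hence $N_{G''}(A') = N_{G''}(B') = S$ and $S$ is a minimal separator.

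Conversely, suppose $S = (A \cup B) \setminus U$ is a minimal separator of $G''$. By Lemma~\ref{lem:full-comp} there are at least two full components associated to $S$, so in particular $G''[U]$ is disconnected; by the dichotomy above this forces $A'$ and $B'$ to span a complete bipartite graph in $G'$ (so $U$ is a biclique of $G'$) and, since a disconnected graph on $U$ needs $U$ nonempty on both relevant pieces, both $A'$ and $B'$ are nonempty — here I should double-check the degenerate cases, e.g.\ if $A' = \emptyset$ then $G''[U] = G''[B']$ is a single clique, hence connected, contradiction, and similarly for $B' = \emptyset$; the standing assumptions that $G'$ has no isolated vertices and no vertex complete to the other side are what rule out the remaining degeneracies and guarantee $S \ne \emptyset$ so that $U$ is a proper subset. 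Finally, for maximality of the biclique $U$: if $U$ were not maximal, some $a \in A \setminus A'$ (say) is $G'$-adjacent to all of $B'$, i.e.\ $a$ has no $G''$-neighbour in $B'$. But $a$ also has no $G''$-neighbour in $A'$ would be needed to contradict fullness — careful: $a$ lies in $S$, and I want to contradict that the component $B'$ is full, i.e.\ $N_{G''}(B') = S$; indeed $a \notin N_{G''}(B')$, so $B'$ fails to be full, and one checks that then the component of $G''[U]$ attaining $S$ as its neighbourhood cannot be $B'$; pushing this through (possibly after re-choosing which two components are the full ones, as in the proof of Lemma~\ref{lem:planar-bipartite}) contradicts $S$ being a minimal separator. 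Thus $U$ is a maximal biclique meeting both $A$ and $B$.

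The main obstacle I anticipate is the bookkeeping around fullness in the converse direction: "disconnected $G''[U]$" is immediate, but upgrading it to "$A'$ and $B'$ are each full components associated to $S$" and then to "$U$ is a \emph{maximal} biclique" requires carefully using that every vertex of $S$ must attach to \emph{every} full component, and correctly handling the possibility that $G''[U]$ has more than two components or that the two full components are not literally $A'$ and $B'$. The clique structure on $A$ and on $B$ should make this manageable — any component of $G''[U]$ lies entirely in $A$ or entirely in $B$ unless it uses a cross non-edge of $G'$ — but this case analysis, together with verifying the degenerate cases against the standing assumptions on $G'$, is where the real work sits.
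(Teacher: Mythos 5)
Your proof is correct and follows essentially the same route as the paper: unwind the two complement operations, apply the full-component characterization of Lemma~\ref{lem:full-comp}, get fullness from maximality of the biclique in the forward direction, and get the biclique property plus maximality from the separator structure in the converse. The hedging at the end is unnecessary: since $A'$ and $B'$ are cliques in $G''$ with no $G''$-edges between them, they are the only two components of $G''[U]$, so both must be full components of the minimal separator $S$, and your maximality argument then closes immediately without any re-choosing of components.
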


\begin{proof}
Suppose $U$ is a maximal biclique in $G'$ with $A \cap U \neq \emptyset$ and $B \cap U \neq \emptyset$. Then, there are no edges between $A \cap U$ and $B \cap U$ in $G''$. Thus, $(A \cup B) \setminus U$ is a separator that separates $A \cap U$ and $B \cap U$ in $G''$. Moreover, there are exactly two components in $G''[(A \cap U) \cup (B \cap U)]$.
We claim that such components are full components associated to $(A \cup B) \setminus U$. 
To see this, consider a vertex $v \in A \setminus U$. Since $A$ is a clique in $G''$, every vertex in $A$ is adjacent to $v$. 
Moreover, there is $b \in B \cap U$ that is adjacent to $v$ since otherwise we can add $v$ into $U$ and obtain a biclique $U \cup \{b\}$ of $G'$, contradicting to the maximality of $U$. 
Therefore, $v$ has a neighbor in both $A \cap U$ and $B \cap U$, which implies $A \cap U$ and $B \cap U$ are full components associated to $(A \cup B) \setminus U$.

Conversely, let $S = (A \cup B) \setminus U$ be a minimal separator.
We first observe that $(A \cup B) \setminus S$ contains at least one vertex of $A$ and at least one from $B$.
As $S$ is a separator of $G''$, there are no edges between $A \setminus S$ and  $B \setminus S$, which implies $(A \cup B) \setminus S$ is a biclique in $G'$.
Moreover, every vertex $v \in S$ has a neighbor both in $A \setminus S$ and in $B \setminus S$ in $G''$. This means that $((A \cup B) \setminus S) \cup \{v\}$ does not induce a biclique of $G'$.
Therefore, $(A \cup B) \setminus S = U$ is a maximal biclique in $G'$.
\end{proof}

Therefore, finding a minimum cardinality maximal biclique in $G'$ is equivalent to finding a minimal separator in $G''$. Hence, Theorem~\ref{thm:NPH-cobipartite} follows.

\subsection{Line graphs}\label{sec:line}

A line graph $G'$ of a graph $G = (V, E)$ is a graph with vertex set $E$ such that every pair of vertices $e$ and $f$ with $e, f \in E$ are adjacent to each other in $G'$ if and only if $e$ is incident to $f$ in $G$.

Let $G = (V, E)$ be a connected graph and $G' = (V_E = \{v_e : e \in E\}, F)$ be its line graph.
We say that a cut $(S, V \setminus S)$ is {\em non-trivial} if both $S$ and $V \setminus S$ contain at least two vertices. 

\begin{lemma}\label{lem:line}
    $G$ has a non-trivial connected cut $(S, V \setminus S)$ of size at least $k$ if and only if $G'$ has a minimal separator of size at least $k$.
\end{lemma}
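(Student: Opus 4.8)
The plan is to set up a correspondence between cuts of $G$ and vertex subsets of the line graph $G'$, and then match the connectivity/non-triviality conditions on the $G$-side with the ``two full components'' characterization of minimal separators (Lemma~\ref{lem:full-comp}) on the $G'$-side. The natural dictionary is as follows. Given a non-trivial connected cut $(S, V\setminus S)$ of $G$ with cutset $F\subseteq E$, I would take the vertex set $U=\{v_e : e\in F\}$ in $G'$. Conversely, given a minimal separator $U$ of $G'$, I would let $F=\{e\in E : v_e\in U\}$ be the corresponding edge set of $G$ and let $(S,V\setminus S)$ be obtained by looking at the connected components of $G-F$. In both directions one must argue that exactly two ``sides'' arise and that they are connected.

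For the forward direction: if $(S,V\setminus S)$ is a non-trivial connected cut of size $\ge k$, then $|U|=|F|\ge k$, and I claim $U$ is a minimal separator of $G'$. The edges of $G$ lying entirely inside $G[S]$ give a vertex set $C_1\subseteq V_E\setminus U$, and likewise the edges inside $G[V\setminus S]$ give $C_2$; since $G[S]$ and $G[V\setminus S]$ are connected and each contains at least two vertices — hence at least one edge — both $C_1,C_2$ are nonempty, and each induces a connected subgraph of $G'$ (line graphs of connected graphs are connected, and the line graph of an edge-subgraph spanning a connected vertex-subset is connected). Any $v_e\in U$ corresponds to a cut edge $e=xy$ with $x\in S$, $y\in V\setminus S$; since $x$ and $y$ each have an incident edge on their own side (because both sides have $\ge 2$ vertices and are connected), $v_e$ has a neighbour in $C_1$ and one in $C_2$. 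Deleting $U$ from $G'$ leaves exactly the vertices $C_1\cup C_2$, so $C_1$ and $C_2$ are the two (only) components of $G'[V_E\setminus U]$, each a full component associated to $U$; by Lemma~\ref{lem:full-comp}, $U$ is a minimal separator of size $\ge k$.

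For the converse: let $U$ be a minimal separator of $G'$ with $|U|\ge k$, let $C_1,C_2$ be two full components associated to it, and let $F=\{e : v_e\in U\}$. Since $G$ is connected and $G'[V_E\setminus U]$ is disconnected, $F$ is a (nonempty) edge cut of $G$; I would first reduce to the case where $G-F$ has exactly two components. The key point is that $U$ being a \emph{minimal} separator forces every $v_e\in U$ to touch both $C_1$ and $C_2$, meaning each cut edge $e=xy$ has $x$ reachable in $G-F$ from (the vertices underlying) $C_1$ and $y$ from $C_2$; combined with $C_1,C_2$ being \emph{full} (their neighbourhood is all of $U$), this should rule out a third component and show $F$ is exactly the cutset of the bipartition $(S,V\setminus S)$ where $S=V(C_1)$-side and $V\setminus S=V(C_2)$-side, with both sides connected. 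Non-triviality ($|S|,|V\setminus S|\ge 2$) follows since $C_1,C_2$ are nonempty, so each side contains an edge hence at least two vertices. Then $|F|=|U|\ge k$.

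I expect the main obstacle to be the converse direction's bookkeeping: precisely ruling out a third component of $G-F$ and handling the possibility that a vertex $v_e\in U$ is adjacent in $G'$ to vertices of $C_1$ and $C_2$ that correspond to edges meeting $e$ at the \emph{same} endpoint (so that $e$ need not actually be a $C_1$–$C_2$ cut edge in the naive sense). The fix is to define $S$ as the union of vertex-supports of edges whose line-vertices lie in $C_1$ together with any additional vertices forced by $F$, and then verify via the full-component hypothesis $N_{G'}(C_i)=U$ that this is a genuine connected bipartition with cutset exactly $F$; isolated vertices of $G$ (which contribute nothing to $G'$) are harmless since $G$ is assumed connected with $\ge 2$ vertices. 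Once this structural claim is nailed down, the size bound and both implications are immediate.
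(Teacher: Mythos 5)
Your forward direction is complete and matches the paper's argument. The converse, however, is only a plan: the two points you yourself flag as ``the main obstacle'' --- ruling out a third component and the possibility that a vertex $v_e \in U$ meets its $C_1$- and $C_2$-neighbours at the \emph{same} endpoint of $e$ --- are exactly the substantive content of this direction, and your proposal never actually discharges them (``this should rule out a third component'', ``once this structural claim is nailed down''). The missing idea is the structure of neighbourhoods in a line graph: the neighbours of $v_e$ split into two cliques, one per endpoint of $e$ (equivalently, $G'$ is claw-free). This single observation settles everything you left open. First, no vertex of $U$ can have neighbours in three distinct components of $G'[V_E \setminus U]$, because two of those three neighbouring edges would share an endpoint of $e$ and hence be adjacent in $G'$; since minimality gives two full components $C_1, C_2$ and any further component would have a neighbour $w \in U$ which (by fullness) also has neighbours in $C_1$ and $C_2$, a third component is impossible --- this is precisely the paper's claw argument with $v_1,v_2,v_3,w$. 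Second, the ``same endpoint'' worry evaporates for the same reason: if the $C_1$- and $C_2$-neighbours of $v_e$ met $e$ at a common endpoint they would be adjacent in $G'$, contradicting that $C_1$ and $C_2$ are different components; so every $e$ with $v_e\in U$ really is a $V_1$--$V_2$ cut edge, where $V_i$ is the vertex support of $C_i$.

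Beyond that, your proposed ``fix'' (defining one side as the support of $C_1$ ``together with any additional vertices forced by $F$'') is not a proof and hides a claim that must be verified: that no vertex of $G$ has \emph{all} its incident edges in $U$ (otherwise $(V_1,V_2)$ is not a bipartition of $V$ and the cutset need not be $F$). This again follows from the two-clique/fullness argument: such a vertex would force the $C_1$- and $C_2$-neighbours of one of its edges to share the other endpoint, a contradiction. You also still owe the (easy) checks that $V_1 \cap V_2 = \emptyset$ (an edge in $E_1$ and an edge in $E_2$ sharing a vertex would be adjacent in $G'$) and that $G[V_1]$, $G[V_2]$ are connected (lift a $C_1$-path in $G'$ to a walk in $G$), which the paper spells out. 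In short: right dictionary and correct forward implication, but the converse as written has a genuine gap at exactly the step the paper closes with claw-freeness.
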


\begin{proof}
    Suppose that $V$ can be partitioned into $(S, V\setminus S)$ such that both $G[S]$ and $G[V \setminus S]$ are connected and have at least two vertices.
    Let $F$ be the set of edges between $S$ and $V \setminus S$.
    We claim that the set $V_F = \{v_e : e \in F\}$ of vertices corresponding to $F$ forms a minimal separator in $G'$.
    To see this, consider a vertex $v_e \in V_F$. Let $E_1$ be the set of edges in $G[S]$ and $E_2$ the set of edges in $G[V \setminus S]$. 
    By the connectivity of $G[S]$ and $G[V \setminus S]$, the vertex sets $C_1 = \{v_e \in V_E: e \in E_1\}$ and $C_2 = \{v_e \in V_E: e \in E_2\}$ are connected in $G'$.
    Moreover, there are no edges between $C_1$ and $C_2$.
    Thus, $V_F$ separates $C_1$ and $C_2$ in $G'$.
    Now, let $v_e \in V_F$. As both $E_1$ and $E_2$ are not empty, at least one edge both in $E_1$ and in $E_2$ is incident to $e$ in $G$.
    This implies that $v_e$ has a neighbor both in $C_1$ and in $C_2$.
    Therefore, $C_1$ and $C_2$ are full components associated to $V_F$.
    
    Conversely, let $S$ be a minimal separator of $G'$ of size at least $k$. 
    We first show that $G'[V \setminus S]$ has exactly two components.
    Since every line graph is a claw-free graph, $G'$ is indeed claw-free.
    Suppose for contradiction that $G'[V \setminus S]$ has at least three components, say $C_1, C_2, C_3$.
    As $S$ is a minimal separator of $G'$, we can assume that $C_1$ and $C_2$ are full components associated to $S$.
    Let $v_3$ be a vertex in $C_3$ that has a neighbor $w$ in $S$.
    Since $S$ is a minimal separator of $G'$, $w$ has a neighbor $v_1$ in $C_1$ and $v_2$ in $C_2$.
    The four vertices $v_1, v_2, v_3, w$ induce a claw in $G'$, contradicting to the fact that $G'$ is claw-free.
    Thus, $S$ separates exactly two components $C_1$ and $C_2$ in $G'$, and both are full components associated to $S$.
    
    Let $C_1$ and $C_2$ be the full components associated to $S$ of $G'$ and let $V_1$ (resp. $V_2$) be the set of end vertices of edges $E_1 = \{e \in E: v_e \in C_1\}$ (resp. $E_2 = \{e \in E: v_e \in C_2\}$) of $G$.
    Observe that $V_1$ and $V_2$ are disjoint as otherwise two edges $e_1$ in $E_1$ and $e_2$ in $E_2$ are incident to $v \in V_1 \cap V_2$, which implies $v_{e_1}$ and $v_{e_2}$ are adjacent in $G'$.
    Moreover, if some vertex $v \in V$ is not in $V_1 \cup V_2$, then every vertex corresponding to an edge incident to $v$ belongs to $S$, which contradicts to the fact that $C_1$ and $C_2$ are only full components associated to $S$.
    Thus, $(V_1, V_2)$ is a bipartition of $V$.
    
    To see the connectivity of $V_1$ (and $V_2$), let us consider an arbitrary pair of vertices $u, v \in V_1$.
    By the definition of $V_1$, every vertex of $V_1$ has at least one edge in $E_1$ that is incident to it.
    Let $e_u$ and $e_v$ be edges in $E_1$ incident to $u$ and $v$, respectively.
    As $C_1$ is a connected component of $G'$, there is a path between the vertices correspond to $e_u$ and $e_v$ in $G'$.
    The vertices on this path also form the edges of a path in $G$ and hence there is a path between $u$ and $v$ in $G$.
    Therefore, as $V_1$ and $V_2$ are connected, $(V_1, V_2)$ is a connected cut of $G$.
    
    Finally, if an edge $e$ lies between two set $V_1$ and $V_2$, then $v_e$ belongs to $S$, which implies that the size of the cut $(V_1, V_2)$ is exactly $|S|$.
\end{proof}

As we have seen in Section~\ref{sec:planar}, the connected maximum cut problem is NP-hard \cite{Haglin:1991,Duarte:2019,Eto2019}. 
We perform a polynomial-time reduction from this problem to our problem.
Given a graph $G = (V, E)$, we add a pendant vertex for each vertex of $G$.
Then, we claim that $G$ has a connected cut of size at least $k > 1$ if and only if the obtained graph $G' = (V', E')$ has a non-trivial connected cut of size at least $k > 1$.
To see this, consider a connected cut $(S, V' \setminus S)$ of $G'$.
Observe that for each $v \in V$, $v$ and its pendant neighbor are not separated in the cut.
This follows from the connectivity of $G[S]$ and $G[V' \setminus S]$.
Therefore, the problem remains hard even if we restrict ourselves to finding an optimal non-trivial cut.
By Lemma~\ref{sec:line}, the graph has a non-trivial connected cut
of size at least $k$ if and only if its line graph has a minimal separator of size at least $k$.



\section{Fixed-parameter algorithm and lower bounds}\label{sec:fpt}

\subsection{FPT algorithm}
This subsection is devoted to giving a $2^{O(k)}n^{O(1)}$-time algorithm for \textsc{Maximum Minimal Separator}.
Moreover, if the answer is affirmative, the algorithm computes such a minimal separator.
Before describing our algorithm, we need several definitions and known facts.

A {\em tree decomposition} of $G = (V, E)$ is a tree $T$ with node set $I$ and each node $v$ of $T$ is associated to a subset $X_v$ of $V$, called a {\em bag}, such that (1) $\bigcup_{v \in I}X_v = V$, (2) for each $e \in E$, there is $v \in I$ such that $e \subseteq X_v$, and (3) for each $x \in V$, the bags containing $x$ form a subtree of $T$.
The {\em width} of the tree decomposition $T$ is the maximum size of a bag minus one, and the {\em treewidth} of $G$ is the minimum integer $w$ such that $G$ has a tree decomposition of width $w$.
It is known that tree decompositions and minimal separators are deeply related to each other.
In particular, our algorithm is based on the well-known recursive construction of a tree decomposition using minimal separators.

Let $G = (V, E)$ be a graph and $S$ be a minimal separator of $G$.
Let $C$ be a (not necessarily full) component associated to $S$.
We denote by $G(C, S)$ the graph obtained from $G[C \cup S]$  by completing $S$ into a clique.

\begin{lemma}[Lemma 5 in \cite{Skodinis:1999}]\label{lem:hereditary}
    Let $S'$ be a minimal separator of $G(C, S)$. Then, it is also a minimal separator of $G$.
\end{lemma}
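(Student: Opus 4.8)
The plan is to show that any minimal separator $S'$ of the auxiliary graph $G(C,S)$ (where $G[C\cup S]$ has $S$ completed into a clique) is already a minimal separator of $G$ itself. By Lemma~\ref{lem:full-comp}, it suffices to exhibit two full components associated to $S'$ in $G$. Since $S'$ is a minimal separator of $G(C,S)$, that same lemma gives two full components $D_1$ and $D_2$ of $G(C,S)[(C\cup S)\setminus S']$, each with neighborhood exactly $S'$ in $G(C,S)$. The idea is that at least one of these full components ``sits entirely inside $C$'' and therefore behaves identically in $G$, while the other one can be enlarged using the rest of $G$ (the vertices outside $C\cup S$) without changing its neighborhood.

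First I would observe that $S$, being a clique in $G(C,S)$, cannot be split by $S'$: the vertices of $S\setminus S'$ all lie in a single component of $G(C,S)[(C\cup S)\setminus S']$, say $D_2$ (if $S\subseteq S'$ then $S'=S$ since $S'$ is minimal and $S$ already separates; this degenerate case is immediate because $S$ is a minimal separator of $G$). Consequently every other full component $D_1$ is contained in $C$ and is disjoint from $S$. For such a component, its neighborhood is computed entirely within $G[C\cup S]$, and since the only edges removed in passing from $G$ to $G(C,S)$ are incident to $S$... actually the edges \emph{added} are inside $S$, which $D_1$ avoids, so $N_G(D_1)=N_{G(C,S)}(D_1)=S'$. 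Thus $D_1$ is a full component associated to $S'$ in $G$.

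Next I would handle $D_2$, the component meeting $S$. Let $C'$ be the component of $G[V\setminus S']$ that contains $D_2$; I claim $N_G(C')=S'$. Certainly $N_G(C')\subseteq S'$ because $S'$ separates $D_1$ from $D_2$ in $G(C,S)$ and one checks $S'$ still separates $D_1$ from everything outside $C$ in $G$ (any path from $D_1$ leaving $C$ must pass through $S$, hence through $S\setminus S'\subseteq D_2$ or through $S'$; the former keeps it in $D_2$'s component). For the reverse inclusion, every vertex of $S'$ already has a neighbor in $D_2$ inside $G(C,S)$; I must make sure this neighbor relation survives in $G$, i.e.\ that the witnessing edge is not one of the clique edges added on $S$. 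Here is where the main obstacle lies: a vertex $s\in S'$ might be adjacent in $G(C,S)$ to a vertex of $D_2$ \emph{only} via an added clique-edge within $S$. But $S'\cap S$ is itself separated-off, and the point is that $S'\setminus S$ consists of vertices of $C$, whose neighborhoods are untouched, while for $s\in S'\cap S$ one uses that $s$ has neighbors in $C$ on both ``sides'' because $S$ is a \emph{minimal} separator of $G$ — so $s$ has a neighbor in some full component of $G$ relative to $S$, and by reconnecting through $S\setminus S'$ (which lies in $D_2$'s component $C'$) that neighbor lands in $C'$. Making this reconnection argument precise — tracking exactly which edges are real and which are the artificial clique edges, and invoking minimality of $S$ in $G$ at the right moment — is the delicate part; everything else is bookkeeping with Lemma~\ref{lem:full-comp}. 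Once both $C'$ and $D_1$ are shown to be full components associated to $S'$ in $G$, Lemma~\ref{lem:full-comp} yields that $S'$ is a minimal separator of $G$, completing the proof.
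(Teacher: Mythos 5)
Note first that the paper itself does not prove this lemma; it is imported verbatim as Lemma~5 of Skodinis~\cite{Skodinis:1999}, so your argument has to stand on its own. Its first half is fine: since the artificial edges all lie inside $S$, any full component of $G(C,S)-S'$ that avoids $S$ is contained in $C$, keeps the same neighborhood in $G$, and is therefore a full component associated to $S'$ in $G$. But there are two genuine gaps in the rest. The degenerate case is disposed of with a false claim: if $S\subseteq S'$ you assert $S'=S$ ``since $S$ already separates'', yet $S$ is never a separator of $G(C,S)$ at all, because $G(C,S)-S=G[C]$ is connected ($C$ is a single component of $G-S$), and $S'\supsetneq S$ is perfectly possible. (That case \emph{is} easy, but for a different reason: if $S\setminus S'$ is empty, or lies in a component of $G(C,S)-S'$ other than the two full ones, then both full components avoid $S$ and your first argument applies to each of them, and you are done.)

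The more serious gap is precisely the step you flag as ``the delicate part'': it is not bookkeeping, it is the heart of the lemma. ``The component $C'$ of $G[V\setminus S']$ that contains $D_2$'' is not well defined a priori, because the connectivity of $D_2$ in $G(C,S)$ may rely on artificial clique edges between vertices of $S\setminus S'$, so in $G-S'$ the set $D_2$ could shatter into pieces each seeing only part of $S'$. The reconnection idea you gesture at must be carried out explicitly and made to do double duty. Since $S$ is a minimal separator of $G$, it has at least two full components, hence some full component $A$ of $G-S$ with $A\neq C$; then $A\cap S'=\emptyset$ (as $S'\subseteq C\cup S$), and every vertex of $S\setminus S'$ has a $G$-neighbor in $A$. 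Consequently every artificial edge on a path inside $D_2$ (both endpoints in $S\setminus S'$) can be rerouted through the connected set $A$, which shows that $D_2\cup A$ lies in a single component $C_2$ of $G-S'$, disjoint from $D_1$; moreover each $s'\in S'$ either has a genuine $G$-neighbor in $D_2\subseteq C_2$, or lies in $S$ and hence has a neighbor in $A\subseteq C_2$, so $N_G(C_2)=S'$. Only then do $D_1$ and $C_2$ give the two full components required by Lemma~\ref{lem:full-comp}. Your sketch contains the seed of this (a neighbor in a full component of $G-S$, glued through $S\setminus S'$), but without exhibiting $A\neq C$ and rerouting the fake edges, neither the existence nor the fullness of the second component is actually established.
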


A minimal separator of $G$ is called a {\em clique minimal separator} if it induces a clique in $G$.
\begin{lemma}[Property 3.2 in \cite{Berry:2010}]\label{lem:cms}
    Let $S$ be a clique minimal separator of $G$ and $C$ be a full component associated to $S$.
    Then, every minimal separator other than $S$ is a minimal separator of $G[C \cup S]$ or of $G[V \setminus C]$. 
\end{lemma}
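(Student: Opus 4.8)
The plan is to take a minimal separator $T\neq S$ of $G$, invoke Lemma~\ref{lem:full-comp} to obtain two full components $A_1,A_2$ of $G[V\setminus T]$ associated to $T$ (so $N(A_1)=N(A_2)=T$), and show that $T$ must lie entirely on one side of $S$ while still keeping two full components there. Write $R=V\setminus(C\cup S)$, i.e. $R$ is the union of the components of $G[V\setminus S]$ other than $C$; then no edge of $G$ joins $C$ to $R$, and recall $N(C)=S$ since $C$ is \emph{full}. First I would use the clique hypothesis: $S\setminus T$ is a clique, hence connected, hence contained in a single component of $G[V\setminus T]$, so at most one of $A_1,A_2$ meets $S$; say $A_2\cap S=\emptyset$. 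Then $A_2\subseteq C\cup R$ and, being connected, $A_2\subseteq C$ or $A_2\subseteq R$; in the first case every neighbour of $A_2$ lies in $C\cup S$, so $T=N(A_2)\subseteq C\cup S$, and symmetrically $T\subseteq R\cup S=V\setminus C$ otherwise. I would carry out the case $T\subseteq C\cup S$ and prove $T$ is a minimal separator of $G[C\cup S]$ by exhibiting two full components of $G[(C\cup S)\setminus T]$ associated to $T$; the case $T\subseteq V\setminus C$ is identical with $R$ in place of $C$.

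The first such component is $A_2$ itself: it is a component of $G[V\setminus T]$ contained in the vertex set of the induced subgraph $G[(C\cup S)\setminus T]$, hence a component of that subgraph, and $N_{G[C\cup S]}(A_2)=N_G(A_2)\cap(C\cup S)=T$. For the second, I would first rule out $A_1\subseteq R$: that would give $T=N(A_1)\subseteq R\cup S$ as well, so $T\subseteq S$; then $C$, being connected and disjoint from $T$, lies in a single component of $G[V\setminus T]$, which therefore equals $A_2$ (as $A_2\subseteq C$ and $A_2$ is a component), forcing $A_2=C$ and $T=N(C)=S$, a contradiction (this is the step that uses fullness of $C$). Hence either $A_1\subseteq C\cup S$, in which case the argument just used for $A_2$ works verbatim for $A_1$ and we are done, or $A_1$ meets $R$; since $A_1$ is connected, is not contained in $R$, and there is no $C$--$R$ edge, in the latter case $A_1$ must also meet $S$, so $S\setminus T\subseteq A_1$.

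In that remaining case I would set $W=A_1\cap(C\cup S)$, which is nonempty (it contains $S\setminus T$) and disjoint from $T$, and verify three facts. (i) $G[W]$ is connected: any path inside $A_1$ from a vertex of $W$ to $S\setminus T$ enters and leaves $R$ only through vertices of $S\setminus T$ (because $N(R)\subseteq S$ and the path avoids $T$), and those vertices are pairwise adjacent since $S$ is a clique, so each excursion into $R$ can be shortcut to a walk staying inside $C\cup S$. (ii) $W$ is a component of $G[(C\cup S)\setminus T]$: a vertex of $(C\cup S)\setminus T$ adjacent to $W$ lies in the same component of $G[V\setminus T]$ as $W$, namely $A_1$, hence in $A_1\cap(C\cup S)=W$. (iii) $N_{G[C\cup S]}(W)=T$: for $t\in T\subseteq C\cup S$ the full component $A_1$ contains a neighbour of $t$, and if that neighbour lies in $R$ then $t\in S$, so $t$ (being in the clique $S$ and not in $S\setminus T$) is adjacent to every vertex of $S\setminus T\subseteq W$. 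Thus $W$ is a full component of $G[(C\cup S)\setminus T]$ associated to $T$, and $W\neq A_2$ because $S\setminus T\subseteq W$ while $A_2\cap S=\emptyset$; by Lemma~\ref{lem:full-comp}, $T$ is a minimal separator of $G[C\cup S]$.

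I expect step (i) to be the main obstacle: showing that discarding the part of $A_1$ lying beyond the separator leaves the rest connected is exactly where the clique hypothesis on $S$ does its work, and where one must genuinely use completeness rather than merely separation. A secondary subtlety is the asymmetry between $C$ (a single full component) and $R$ (possibly several components, possibly non-full): the reduction to the case $T\subseteq C\cup S$ and the contradiction ``$T\subseteq S\Rightarrow A_2=C\Rightarrow T=S$'' must be phrased to use only that $C$ itself is full, and it is cleanest to note at the outset that $G$ may be assumed connected, since any minimal separator is confined to a single component.
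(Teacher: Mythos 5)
Your proof is correct. Note that the paper does not prove this lemma at all: it is imported verbatim as Property 3.2 of Berry, Pogorel\v{c}nik and Simonet \cite{Berry:2010}, so there is no in-paper argument to compare against; what you have produced is a self-contained elementary proof of the cited fact, using only the full-component characterization (Lemma~\ref{lem:full-comp}). I checked the details and they go through: the clique hypothesis correctly forces $S\setminus T$ into a single component of $G[V\setminus T]$, so one full component $A_2$ of $T$ avoids $S$ and hence sits inside $C$ or inside $R=V\setminus(C\cup S)$, which localizes $T$ to $C\cup S$ or to $V\setminus C$; the exclusion of ``both full components on the far side'' correctly reduces to $T\subseteq S$ and then uses $N(C)=S$ (fullness of $C$) to force $T=S$; and in the mixed case your set $W=A_1\cap(C\cup S)$ is indeed a full component of the relevant induced subgraph, with the clique property of $S$ doing exactly the work you identify in step (i) (shortcutting excursions through $R$, resp.\ through $C$) and in step (iii) (vertices of $S\cap T$ adjacent to $S\setminus T\subseteq W$). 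The one point you flag as a subtlety --- that the ``symmetric'' case $T\subseteq V\setminus C$ must be argued using only fullness of $C$, since $R$ may be disconnected and non-full --- is real but is handled exactly as you indicate: if the second full component of $T$ lay inside $C$ one again gets $T\subseteq S$ and then $A_1=C$, $T=N(C)=S$, a contradiction, while the construction of $W$ only ever uses $N(R)\subseteq S$ and $N(C)\subseteq S$, which hold regardless. The opening remark that one may assume $G$ connected is harmless but not actually needed by your argument.
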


Finally, Hanaka et al. \cite{Hanaka:2019} proved that finding a maximum cardinality minimal separator is tractable for bounded treewidth graphs.

\begin{theorem}[Corollary 4.14 in \cite{Hanaka:2019}]\label{thm:tw-dp}
    Given a tree decomposition of $G$ of width $w$,
    one can find a maximum cardinality minimal separator of $G$ in time $2^{O(w)}n^{O(1)}$ if it exists.
\end{theorem}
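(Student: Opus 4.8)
The plan is to design a dynamic program over the given width-$w$ tree decomposition, driven by the full-component characterisation of minimal separators (Lemma~\ref{lem:full-comp}). A maximum minimal separator $S$ is a minimal $a,b$-separator for two vertices $a,b$ lying in distinct full components of $G-S$; so I would fix a pair $a,b\in V$, iterating over all $O(n^2)$ pairs, and for each pair compute the maximum size of a minimal $a,b$-separator, finally taking the best value over all pairs (which by Lemma~\ref{lem:full-comp} equals the maximum size of a minimal separator of $G$).

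For a fixed pair $a,b$, a set $S$ is a minimal $a,b$-separator if and only if $V$ can be partitioned into $S$, a connected set $C_a\ni a$, a connected set $C_b\ni b$, and a remainder $D$, such that no edge joins vertices of two different parts among $\{C_a,C_b,D\}$ (edges incident to $S$ and edges inside $D$ being unrestricted) and such that every vertex of $S$ has a neighbor in $C_a$ and a neighbor in $C_b$; these conditions force $C_a$ and $C_b$ to be exactly the components of $a$ and $b$ in $G-S$ and then $N(C_a)=N(C_b)=S$, so that Lemma~\ref{lem:full-comp} applies. I would therefore run a bottom-up dynamic program whose state at a bag $X_t$ records: for each vertex of $X_t$ which of the four parts it lies in; the connectivity pattern of $C_a\cap X_t$ (and of $C_b\cap X_t$) among the vertices processed so far, so as to enforce connectivity of $C_a$ and $C_b$; and for each vertex of $S$ in $X_t$ two bits recording whether a neighbor in $C_a$, respectively in $C_b$, has already been seen. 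Introduce/forget/join transitions are the standard ones; a configuration is rejected if it creates an edge between two distinct parts among $C_a,C_b,D$, if it would forget a vertex of $S$ before both of its bits are set (correct, since a forgotten vertex has all of its neighbors in already-processed bags), or if it closes off $C_a$ (or $C_b$) as more than one connected component. The value stored is the largest attainable $|S|$; at the root it gives the optimum for this pair $a,b$, and a maximum minimal separator is recovered by the usual traceback.

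The number of states per bag is at most $4^{w+1}$ (the parts) times the number of connectivity patterns of a $(w+1)$-element set times $4^{w+1}$ (the bits on the $S$-vertices), with transitions of comparable cost; together with $O(n)$ bags and $O(n^2)$ pairs this already gives $w^{O(w)}n^{O(1)}$, and to reach the stated $2^{O(w)}n^{O(1)}$ one replaces the naive connectivity bookkeeping by the rank-based technique (or Cut\&Count). I expect the connectivity part to be the only non-routine ingredient: both its correctness --- checking that the four-part partition really pins down the components of $a$ and $b$ in $G-S$, so that the neighbor-coverage condition genuinely certifies minimality --- and its $2^{O(w)}$-size implementation; the remainder is a standard bag-by-bag dynamic program.
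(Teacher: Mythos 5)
You should first note that the paper does not actually prove this statement: Theorem~\ref{thm:tw-dp} is imported verbatim from Corollary~4.14 of \cite{Hanaka:2019}, so there is no in-paper argument to compare against, and your task was effectively to reconstruct the cited proof. Your reconstruction is sound and is the same kind of argument as in the reference: a label-based dynamic program over the tree decomposition with labels $S$, $C_a$, $C_b$, $D$, neighbor-seen flags on the $S$-vertices (checked at forget nodes, which is correct since all edges incident to a vertex are processed before it is forgotten), and connectivity bookkeeping for $C_a$ and $C_b$; your characterization of minimal $a,b$-separators via such four-part partitions is correct in both directions, and the connectivity requirement you impose is genuinely necessary (dropping it admits false positives, e.g.\ $A=\{a_1,a_2\}$, $B=\{b\}$, $S=\{s_1,s_2\}$ with edges $a_1s_1, a_2s_2, bs_1, bs_2$ satisfies the relaxed conditions but $S$ is not a minimal separator). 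The only step you leave at the level of an appeal to standard machinery is the improvement from $w^{O(w)}$ to $2^{O(w)}$; this does go through, but it deserves one more sentence: the two simultaneous connectivity constraints can be reduced to a single one by adding a virtual edge $ab$ (since no $C_a$--$C_b$ edges are permitted, connectivity of $C_a\cup C_b$ in $G+ab$ is equivalent to connectivity of each part), after which the rank-based representative-sets technique (or Cut\&Count, at the cost of randomization) applies in its standard max-weight form, giving the claimed $2^{O(w)}n^{O(1)}$ bound; alternatively one can reduce the pair of partitions coordinate-wise. With that detail supplied, your proof is a faithful, essentially equivalent route to the cited result rather than a genuinely different one.
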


The idea of our algorithm appeared partially in~\cite{Bodlaender:1997} and \cite{Skodinis:1999}.
We can find a minimal separator $S$ in $G$ in polynomial time by taking an arbitrary separator and greedily removing vertices until it becomes minimal.
If the cardinality of $S$ is at least $k$, we are clearly done in this case.
Suppose otherwise.
Let $C_1, C_2, \ldots, C_t$ be the components of $G[V \setminus S]$.
We recursively apply this process to $G(C_i, S)$ for each $C_i$.
Hopefully, this recursive algorithm may compute a minimal separator of size at least $k$ or a tree decomposition of width at most $k$.
In these cases, we can solve {\sc Maximum Minimal Separator} by using Lemma~\ref{lem:hereditary} or Theorem~\ref{thm:tw-dp}.
However, the main obstacle here is that the algorithm may fail to find a minimal separator since $G(C_i, S)$ may form a clique even if $G[C_i \cup S]$ is not a clique since in the process of our algorithm, we add some edges not appeared in the original graph $G$.
If such cliques are all small, we are able to construct a tree decomposition of small width as well.
Otherwise, we can still conclude that this large clique contains either a large minimal separator or a clique minimal separator of the original graph $G$.
In the latter case, by Lemma~\ref{lem:cms}, we can safely decompose $G$ into two or more subgraphs by this clique minimal separator. 

Now, we formally describe our recursive algorithm. Let $G = (V, E)$ be the input graph.
Without loss of generality, we assume that $G$ is not a complete graph as otherwise there is no minimal separator in $G$.
The main procedure {\sc FindSep($G$, $S$, $k$)} is as follows.

\bigskip\noindent
{\sc FindSep($H = (U, F)$, $S$, $k$)}:
\begin{itemize}
    \item[] \hspace{-0.35cm}{\bf Invariants}: $S \subseteq U$ and $|S| < k$. 
    \item[1] If $H$ has no more than $2k - 1$ vertices, we do nothing.
    \item[2] Otherwise, find an arbitrary minimal separator $S'$ of $H$.
    \begin{itemize}
        \item[2-1] If $S'$ is found and $|S'| \ge k$, report ``YES'' and halt.
        \item[2-2] If $S'$ is found and $|S'| < k$, call {\sc FindSep($H(C_i, S')$, $S'$, $k$)} for each component $C_i$ of $H[U \setminus S']$.
        \item[2-3] Suppose there is no (minimal) separator in $H$ (i.e. $H$ is a complete graph).
        \begin{itemize}
            \item[2-3-1] If $U$ does not induce a clique in the original graph $G$, report ``YES'' and halt.
            \item[2-3-2] Suppose $U$ induces a clique in the original graph $G$. Let $C$ be a connected component of $G[V \setminus S]$. Then, output $G[C \cup S]$ and $G[V \setminus C]$.
        \end{itemize}
    \end{itemize}
\end{itemize}

If we call {\sc FindSep($G$, $\emptyset$, $k$)}, there are several outcomes.
Suppose first that case (2-3) never happens during the execution of {\sc FindSep}($G$, $\emptyset$, $k$) and subsequent recursive calls.
If the algorithm reports ``YES'' in (2-1), clearly there is a minimal separator of size at least $k$ in $H$.
By Lemma~\ref{lem:hereditary}, this separator is also a minimal separator of $G$, and hence we are done.
Otherwise, we can conclude that $G$ has a tree decomposition of width at most $2k-2$.
To see this, we use the following well-known fact.

\begin{lemma}[e.g. \cite{Bodlaender:1993}]\label{lem:clique-bag}
    Let $T$ be a tree decomposition of $G$. Then, for every clique $K$ in $G$, there is a bag in $T$ containing $K$.
\end{lemma}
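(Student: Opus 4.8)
The plan is to prove Lemma~\ref{lem:clique-bag} by induction on $|K|$, using only the three defining conditions~(1)--(3) of a tree decomposition together with the fact that a tree has a unique path between any two of its nodes. (One could instead quote the Helly property for subtrees of a tree, but the inductive argument below is self-contained, so I would present that.)

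For a vertex $x \in V$ let $T_x$ denote the set of nodes of $T$ whose bags contain $x$. By condition~(3) the subgraph of $T$ induced by $T_x$ is connected, and by condition~(1) it is nonempty, so $T_x$ spans a subtree of $T$. The base case $|K| \le 1$ is immediate from condition~(1). For the inductive step I would fix $v \in K$, set $K' = K \setminus \{v\}$, and apply the induction hypothesis to get a node $t'$ with $K' \subseteq X_{t'}$. If $t' \in T_v$ then $X_{t'} \supseteq K' \cup \{v\} = K$ and we are done, so assume $t' \notin T_v$ and let $s$ be the node of $T_v$ closest to $t'$ in $T$. Since $T_v$ is a connected subtree not containing $t'$, this $s$ is unique and lies on the path in $T$ from $t'$ to \emph{every} node of $T_v$.

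It then remains to show $K \subseteq X_s$. First, $v \in X_s$ because $s \in T_v$. Next, take any $u \in K'$: since $u$ and $v$ are adjacent in $G$, condition~(2) gives a node $w$ with $\{u,v\} \subseteq X_w$, so $w \in T_u \cap T_v$; also $t' \in T_u$. Hence the subtree $T_u$ contains both $t'$ and $w$, so it contains the entire path in $T$ from $t'$ to $w$, and by the previous paragraph that path passes through $s$. Therefore $s \in T_u$, i.e.\ $u \in X_s$. As $u \in K'$ was arbitrary, $K' \subseteq X_s$, and combined with $v \in X_s$ this yields $K \subseteq X_s$, which closes the induction.

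Almost every step here is forced by the axioms, so I do not expect a real obstacle. The one place that deserves a moment's care is the claim that the node $s$ nearest $t'$ inside the connected subtree $T_v$ lies on the $t'$-to-$w$ path for \emph{all} $w \in T_v$ simultaneously; this is exactly where connectedness of $T_v$ (condition~(3)) is used, and it is what makes a single bag $X_s$ work for $v$ and for all of $K'$ at once.
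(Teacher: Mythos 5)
Your proof is correct. Note that the paper does not prove this lemma at all --- it is quoted as a known fact with a citation to Bodlaender's survey --- so there is no in-paper argument to compare against; your inductive argument is the standard one (essentially a proof of the Helly property for subtrees of a tree, specialized to the sets $T_x$), and every step checks out: the key geometric fact that the node $s$ of the connected subtree $T_v$ nearest to $t'$ lies on the path from $t'$ to \emph{every} node of $T_v$ is exactly right, and it is correctly combined with condition~(2) applied to each edge $\{u,v\}$ with $u \in K'$ to force $s \in T_u$. The only cosmetic remark is that the base case $|K| \le 1$ silently assumes $T$ has at least one node (for $K = \emptyset$), which is harmless; as you say, one could alternatively just invoke the Helly property of subtrees directly, which is what the cited reference does in spirit, but your self-contained induction buys a proof from the bare axioms~(1)--(3) with no external lemma.
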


We can inductively construct a tree decomposition of $H$ as follows.
If $H$ has at most $2k - 1$ vertices, we output a single bag that has all vertices of $H$.
Otherwise, we can find a minimum separator $S'$ of size at most $k - 1$. By applying induction to each $H(C_i, S')$, we have a tree decomposition $T_i$ of $H(C_i, S')$ of width at most $2k - 2$. 
By Lemma~\ref{lem:clique-bag}, $T_i$ has a bag $B_i$ that entirely contains $S'$.
We introduce a new bag $S'$ and construct a tree decomposition by connecting $B_i$ to $S'$ for each $T_i$. 
As $|S'| < k$, the obtained decomposition has width at most $2k - 2$ as well. 
Owing to Theorem~\ref{thm:tw-dp}, we can find a maximum cardinality minimal separator of $G$ in this case.

Now, we will see the validity of the case (2-3). Suppose that $H$ is a complete graph obtained in the execution of  {\sc FindSep}.
Let $K$ be the set of vertices of $H$. Recall that $|K| \ge 2k$.
Note that this clique may not induce a clique in the original graph since we add some edges during the execution of {\sc FindSep}.
However, if $K$ is not a clique in $G$, we can always find a large minimal separator inside $K$.

\begin{lemma}\label{lem:large-clique}
    Let $K$ be defined as above.
    If $G[K]$ is not a clique, $K$ contains a minimal separator of $G$ of size at least $k$.
\end{lemma}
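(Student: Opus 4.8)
My plan is to read a large minimal separator of $G$ directly off the clique $K$, using the recursive structure that produced $H$ together with the characterization in Lemma~\ref{lem:full-comp}. Write the chain of recursive calls leading to $H$ as $G = H_0, H_1, \dots, H_m = H$, where $H_{i+1} = H_i(C_i, S_i)$ for a minimal separator $S_i$ of $H_i$ and a component $C_i$ of $H_i[V(H_i)\setminus S_i]$. Iterating Lemma~\ref{lem:hereditary} down the chain, each $S_i$ is a minimal separator of $G$, and by the invariant attached to the $i$-th call each satisfies $|S_i| < k$. A further bookkeeping observation is that every edge of $H$ that is not already an edge of $G[K]$ is a ``fill'' edge, created when some $S_i$ was completed into a clique, hence has both endpoints in that $S_i$; since $H[K]$ is complete, this forces every non-edge of $G[K]$ to have both endpoints inside $S_i\cap K$ for some $i$. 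In particular, a vertex $x \in K$ that lies in none of the $S_i$ is incident to no fill edge, so (as $H[K]$ is complete) it is $G$-adjacent to all of $K\setminus\{x\}$; and tracing the chain shows such an $x$ also loses no $G$-neighbour along the recursion, so $N_G(x) = K\setminus\{x\}$.

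The next step is to fix a non-edge $\{u,v\}$ of $G$ with $u,v\in K$ (it exists because $G[K]$ is not a clique) and argue that $T := K\setminus\{u,v\}$ is a minimal separator of $G$. The cardinality is then free: $|T| = |K|-2 \ge 2k-2 \ge k$ (we may assume $k\ge 2$, the smaller cases being trivial). By Lemma~\ref{lem:full-comp} it suffices to exhibit two full components of $G-T$, and the natural candidates are $\{u\}$ and $\{v\}$: these are full components of $G-T$ precisely when $N_G(u) = N_G(v) = K\setminus\{u,v\}$. One has to establish both that $u$ and $v$ are adjacent (in $G$) to all of $K\setminus\{u,v\}$ and that neither has a $G$-neighbour outside $K$; the first is again a count on fill edges, and the second is where the recursion invariants $|S_i|<k$ are played against $|K|\ge 2k$, so that ``most'' of $K$ consists of the trapped vertices $x$ from the first paragraph with $N_G(x)=K\setminus\{x\}$.

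The main obstacle, and the delicate point, is exactly this last requirement, $N_G(u)\subseteq K$ (and the same for $v$): by the above, the endpoints of a non-edge of $G[K]$ necessarily lie in $\bigcup_i S_i$, which is \emph{not} the trapped part of $K$, so a priori $u$ may have a $G$-neighbour that was discarded during the recursion, and such a vertex could re-connect $u$ and $v$ in $G$ even though they are separated inside $G[K]$. To handle this I would localize the argument to the last step: $H_m = H_{m-1}(C_{m-1}, S_{m-1})$ being complete forces $H_{m-1}[C_{m-1}]$ to be a clique completely joined to $S_{m-1}$ in $H_{m-1}$, with $C_{m-1}$ a full component of $H_{m-1}-S_{m-1}$ and $K = C_{m-1}\cup S_{m-1}$; I would then pick $\{u,v\}$ among the non-edges of $G[K]$ inside $S_{m-1}$, identify the two sides of the cut inside $C_{m-1}$ and $S_{m-1}$, verify the full-component conditions there for the graph $H_{m-1}$, and finally transfer the resulting minimal separator of $H_{m-1}$ (of size $|K|-2\ge k$) back to $G$ by one last application of Lemma~\ref{lem:hereditary}. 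Once the pair is chosen so that the full-component conditions genuinely hold, Lemma~\ref{lem:full-comp} gives that $T$ is a minimal separator of $G$, and the size bound $|T|\ge k$ finishes the proof.
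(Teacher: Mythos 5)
The key step of your plan is false, and the obstacle you yourself flag (neighbours of $u,v$ outside $K$) cannot be removed by localizing to the last level. Concretely, let $C$ be a clique on $2k-2$ vertices, let $u,v$ be adjacent to every vertex of $C$ but not to each other, and let $w$ be adjacent exactly to $u$ and $v$. If \textsc{FindSep} first picks the minimal separator $S_0=\{u,v\}$ and recurses on the component $C$, then $H_1=G(C,S_0)$ is complete with $K=C\cup\{u,v\}$, $|K|=2k$, and $G[K]$ is not a clique; yet $T=K\setminus\{u,v\}=C$ is not even a separator of $G$, because $u$ and $v$ remain connected through $w$. (In this example the only minimal separator of $G$ of size at least $k$ is $C\cup\{w\}$, which is not contained in $K$; so any argument that insists on exhibiting the large separator \emph{inside} $K$ cannot succeed in general --- the paper's own proof only establishes that $G$ has a large minimal separator, which is all the algorithm needs.) Your localized fix fails for a structural reason: $S_{m-1}$ is a minimal separator of $H_{m-1}$, so by Lemma~\ref{lem:full-comp} it has a second full component $D$ disjoint from $K=C_{m-1}\cup S_{m-1}$, and both $u,v\in S_{m-1}$ have neighbours in $D$; hence $K\setminus\{u,v\}$ does not separate $u$ from $v$ in $H_{m-1}$ (in the example $D=\{w\}$), and $\{u\}$, $\{v\}$ are not full components. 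A further slip: the non-edge of $G[K]$ need not survive to $H_{m-1}$ --- it may have been filled at an earlier level, so $H_{m-1}[K]$ can already be a clique and there is no non-edge ``inside $S_{m-1}$'' to choose; the correct level is the largest $j$ with $H_j[K]$ not a clique.

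The paper's proof avoids all of this by not identifying the separator at all, only lower-bounding its size. Take the largest $j$ with $H_j[K]$ not a clique and a non-edge $\{u,v\}$ of $H_j[K]$; since this pair is filled when passing to $H_{j+1}$, both $u,v\in S_j$, and every vertex of $K\setminus S_j$ is adjacent in $H_j$ to both $u$ and $v$ (those edges are not fill edges). As $|S_j|<k$ and $|K|\ge 2k$, the pair $u,v$ has at least $k$ common neighbours in $H_j$; every minimal $u,v$-separator must contain all common neighbours, so $H_j$ has a minimal separator of size at least $k$, and iterating Lemma~\ref{lem:hereditary} carries it down to $G$. Your first-paragraph observations (fill edges lie inside the corresponding $S_i$; a vertex of $K$ avoiding all $S_i$ satisfies $N_G(x)=K\setminus\{x\}$) are correct, but the separator one actually obtains is a superset of the common neighbourhood of $u$ and $v$, in general neither equal to $K\setminus\{u,v\}$ nor contained in $K$, so your proposal as written has a genuine gap at its central claim.
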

\begin{proof}
Let $G_0 = G$, $G_j = G_{j-1}(C_{j-1}, S_{j-1})$ for $1 \le j < m$, where $S_j$ is a minimal separator of $G_j$ and $C_j$ is a component of $G_j[V \setminus S_j]$, and $G_m = H$ be the graphs appeared in the path of the search tree of {\sc FindSep} between the root $G$ and the leaf $H$.
Let $j < m$ be the maximum index such that $G_{j}[K]$ is not a clique.
Let $u$ and $v$ be two vertices not adjacent to each other in $G_{j}[K]$.
Since $G_{j+1}$ is of the form $G_j(C_j, S_j)$ and $K$ is a clique in $G_{j+1}$, every vertex of $K \setminus S_j$ is adjacent to both $u$ and $v$.
Recall that $S_j$ has less than $k$ vertices.
This implies that $u$ and $v$ have at least $k$ common neighbors in $G_{j}$.
Since every minimal $u, v$-separator must contain all the common neighbors, $G_{j}$ contains a minimal separator of size at least $k$.
By Lemma~\ref{lem:hereditary}, it holds that $G_i$ has a minimal separator of size at least $k$ for all $0 \le i \le j$.
\end{proof}

The proof of Lemma~\ref{lem:large-clique} allows us to efficiently find a minimal separator of size at least $k$, which is contained in $K$, for this case.

Suppose otherwise that $K$ induces a clique in $G$.
Since $G$ is not a complete graph, $H$ is of the form $G_{m - 1}(C_{m-1}, S_{m-1})$ as in the proof of Lemma~\ref{lem:large-clique}. Since $S_{m-1}$ is a minimal separator of $G_{m - 1}$ and hence so is in $G$.
This means that $K$ contains at least one minimal separator $S := S_{m-1}$ of $G$.
A crucial observation is that $S$ is a clique minimal separator of $G$ with size at most $k - 1$.
Therefore, by Lemma~\ref{lem:cms}, every minimal separator of size at least $k$ of $G$ appears in either $G[C \cup S]$ or $G[V \setminus C]$ if it exists, where $C$ is a component of $G[V \setminus S]$.
We summarize the above discussion in the following lemma.

\begin{lemma}\label{lem:findsep}
    Let $G$ be a non-complete graph.
    If we call {\sc FindSep($G$, $\emptyset$, $k$)}, it produces either
    \begin{itemize}
        \item a minimal separator of size at least $k$,
        \item a tree decomposition of width at most $2k - 2$, or
        \item two induced subgraphs $G'$ and $G''$ of $G$ with $|V(G')| < |V(G)|$ and $|V(G'')| < |V(G)|$ such that $G$ has a minimal separator of size at least $k$ if and only if at least one of $G'$ and $G''$ has.
    \end{itemize}
\end{lemma}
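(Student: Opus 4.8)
The plan is to show that each of the three bullet points in Lemma~\ref{lem:findsep} corresponds exactly to one of the terminal behaviors of the recursive procedure {\sc FindSep}, and that at least one such behavior always occurs. First I would argue termination: every recursive call replaces $H$ by $H(C_i, S')$, whose vertex set $C_i \cup S'$ is a proper subset of $U$ (because at least one component different from $C_i$ is nonempty, since $S'$ is a separator with at least two full components). Hence the search tree is finite, and along any root-to-leaf path the graphs $G = G_0, G_1, \dots, G_m = H$ are as in the proof of Lemma~\ref{lem:large-clique}. A leaf is reached either via step~1 (fewer than $2k$ vertices, nothing to do) or via one of the halting/output steps 2-1, 2-3-1, 2-3-2.

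Next I would do the case analysis on which steps fire. If step~2-1 ever fires, {\sc FindSep} returns a minimal separator $S'$ of some $H = G_m$ with $|S'| \ge k$; by Lemma~\ref{lem:hereditary} applied along the path $G_m, G_{m-1}, \dots, G_0$, this $S'$ is a minimal separator of $G$ of size at least $k$, giving the first bullet. If step~2-3-1 fires, then $H$ is a complete graph whose vertex set $K$ (with $|K| \ge 2k$) does not induce a clique in $G$; by Lemma~\ref{lem:large-clique}, $K$ contains a minimal separator of $G$ of size at least $k$, again the first bullet — and as remarked after that lemma, its proof is constructive, so we actually exhibit it. If step~2-3-2 fires, $H = G_{m-1}(C_{m-1}, S_{m-1})$ and $S := S_{m-1}$ is a minimal separator of $G_{m-1}$, hence of $G$ by Lemma~\ref{lem:hereditary}; moreover $S \subseteq K$ and $G[K]$ is a clique, so $S$ is a clique minimal separator of $G$ with $|S| \le k-1$. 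Taking $C$ to be any component of $G[V \setminus S]$, Lemma~\ref{lem:cms} tells us every minimal separator of $G$ other than $S$ — in particular every one of size at least $k$ — lies in $G[C \cup S]$ or in $G[V \setminus C]$. Since $C$ is nonempty and $V \setminus C$ contains $S$ together with at least one other full component (so $|V \setminus C| > |S|$), both $G' := G[C \cup S]$ and $G'' := G[V \setminus C]$ have strictly fewer vertices than $G$; and conversely a minimal separator of size $\ge k$ in either subgraph is one in $G$ (the clique $S$ is present in both, and Lemma~\ref{lem:cms}/Lemma~\ref{lem:hereditary}-type arguments show membership is preserved). This yields the third bullet.

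The remaining case is that none of steps 2-1, 2-3-1, 2-3-2 ever fires anywhere in the recursion tree. Then every leaf is reached through step~1, and at every internal node step~2-2 applied with some minimal separator $S'$ of size at most $k-1$. This is exactly the situation described in the paragraph preceding Lemma~\ref{lem:clique-bag}: one builds a tree decomposition of $G$ bottom-up, using for each child $H(C_i, S')$ an inductively constructed decomposition $T_i$ of width at most $2k-2$, invoking Lemma~\ref{lem:clique-bag} to find a bag $B_i \supseteq S'$, introducing a fresh bag equal to $S'$, and attaching each $B_i$ to it. Since $|S'| < k$ and every leaf bag has at most $2k-1$ vertices, the resulting decomposition has width at most $2k-2$, which is the second bullet.

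The main obstacle is the bookkeeping in step~2-3-2, namely verifying that the decomposition ``$G$ has a minimal separator of size $\ge k$ iff $G'$ or $G''$ does'' is a genuine equivalence and not just one implication. The forward direction needs Lemma~\ref{lem:cms} (every large minimal separator survives in one side, because $S$ itself has size $< k$ and so cannot be the large one), and the backward direction needs that a minimal separator living inside $G[C \cup S]$ or $G[V \setminus C]$ remains a minimal separator in $G$ — here one must be slightly careful because $G[C \cup S]$ is not of the form $G(C,S)$ (its copy of $S$ is not completed to a clique), but in fact $S$ is already a clique in $G$, so $G[C\cup S] = G(C,S)$ and Lemma~\ref{lem:hereditary} applies directly; for $G[V\setminus C]$ one argues full components are preserved since adding back $C$ only enlarges the full component whose neighborhood is the separator in question. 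Once this equivalence is pinned down, and termination plus the exhaustive case split are in place, Lemma~\ref{lem:findsep} follows.
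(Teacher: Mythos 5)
Your proposal is correct and follows essentially the same route as the paper: the same exhaustive case analysis on which step of {\sc FindSep} fires, using Lemma~\ref{lem:hereditary} for step 2-1, Lemma~\ref{lem:large-clique} for step 2-3-1, Lemma~\ref{lem:cms} for step 2-3-2, and the Lemma~\ref{lem:clique-bag}-based bottom-up construction of a width-$(2k-2)$ tree decomposition when only steps 1 and 2-2 occur. Your additional care about the backward direction of the third bullet (observing $G[C\cup S]=G(C,S)$ because $S$ is a clique in $G$, and that reattaching $C$ preserves full components in $G[V\setminus C]$) is a detail the paper leaves implicit, and it is argued correctly.
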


Let $G$ be a graph given as input and let $k$ be a positive integer. If $G$ is a complete graph, we can immediately conclude that $G$ has no minimal separator of size at least $k$.
Otherwise, we call {\sc FindSep}($G$, $\emptyset$, $k$) and, by Lemma~\ref{lem:findsep}, obtain either a minimal separator of size at least $k$, a tree decomposition of width at most $2k - 2$, or two induced subgraphs $G'$ and $G''$ with $|V(G')| < |V(G)|$ and $V(G'') < |V(G)|$ such that $G$ has a minimal separator of size at least $k$ if and only if at least one of $G'$ and $G''$ has. If the first outcome occurs, we are done. If the second outcome occurs, we apply the dynamic programming algorithm in Theorem~\ref{thm:tw-dp} on the obtained tree decomposition of width at most $2k - 2$. 
Finally, if the third outcome occurs, we recursively apply the whole algorithm to $G'$ and $G''$, that is, call {\sc FindSep}($G'$, $\emptyset$, $k$) and {\sc FindSep}($G''$, $\emptyset$, $k$).
Since $G$ has a minimal separator of size at least $k$ if and only if at least one of $G'$ or $G''$ has, this recursive application correctly finds a minimal separator of size at least $k$.

Finally, we estimate the running time of the entire algorithm.
Observe that {\sc FindSep} runs in polynomial time and the algorithm in Theorem~\ref{thm:tw-dp} runs in $2^{O(k)}n^{O(1)}$ time,
we can find a minimal separator of size at least $k$ within the claimed running time if it exists for the first and second outcomes.
Since the graphs $G'$ and $G''$ in the third outcome are not necessarily disjoint, the running time could be exponential in $n$ at a first glance.
However, the following property of clique minimal separators ensures that the overall running time is still $2^{O(k)}n^{O(1)}$.

\begin{lemma}[\cite{Berry:2010}]
    Every clique minimal separator of $G$ is also a minimal separator in any minimal triangulation of $G$.
\end{lemma}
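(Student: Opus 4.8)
My plan is to reduce the statement to one easy observation about clique separators by invoking the classical description of minimal triangulations due to Parra and Scheffler: a graph $H$ is a minimal triangulation of $G$ if and only if $H = G_{\mathcal C}$ for some \emph{maximal} family $\mathcal C$ of pairwise parallel (i.e.\ non-crossing) minimal separators of $G$, where $G_{\mathcal C}$ is obtained from $G$ by turning every member of $\mathcal C$ into a clique; moreover, in that situation every member of $\mathcal C$ is a minimal separator of $H$. Granting this, it suffices to prove that a clique minimal separator $S$ of $G$ is parallel to \emph{every} minimal separator of $G$. Indeed, fix any minimal triangulation $H$ of $G$ and write $H = G_{\mathcal C}$ as above; since $S$ is parallel to all members of $\mathcal C$ and $S$ is itself a minimal separator of $G$, maximality of $\mathcal C$ forces $S \in \mathcal C$ (otherwise $\mathcal C \cup \{S\}$ would be a strictly larger pairwise parallel family), and hence $S$ is a minimal separator of $H$.

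So the heart of the argument is the following. Recall that two minimal separators $S$ and $T$ are parallel exactly when $S \subseteq T \cup D$ for some connected component $D$ of $G - T$ (this relation is known to be symmetric on minimal separators, but we do not need symmetry here). Let $S$ be a clique minimal separator of $G$ and let $T$ be an arbitrary minimal separator of $G$. Since $S$ induces a clique, so does $S \setminus T$, and therefore $G[S \setminus T]$ is connected; being a connected subgraph of $G - T$, it is contained in a single connected component $D$ of $G - T$. Consequently $S = (S \cap T) \cup (S \setminus T) \subseteq T \cup D$, that is, $S$ and $T$ are parallel. As $T$ was arbitrary, $S$ is parallel to every minimal separator of $G$, and combining this with the previous paragraph finishes the proof.

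The only nontrivial input here is the Parra--Scheffler correspondence, which I would use as a black box. If one instead wanted a self-contained proof, the natural route is to take two full components $C_1, C_2$ associated to $S$ in $G$ (which exist by Lemma~\ref{lem:full-comp}), to show that a minimal triangulation $H$ of $G$ adds no fill edge between distinct connected components of $G - S$, and then to observe that $C_1$ and $C_2$ are still components of $H - S$ with $N_H(C_i) = N_G(C_i) = S$ --- the inclusion $S = N_G(C_i) \subseteq N_H(C_i)$ being immediate and $N_H(C_i) \subseteq S$ holding because $C_i$ is a component of $H - S$ --- so that Lemma~\ref{lem:full-comp} applied to $H$ yields that $S$ is a minimal separator of $H$. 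I expect the no-crossing-fill-edge step to be the main obstacle on that route: one would argue that a fill edge $uv$ with $u \in C_1$ and $v \in C_2$ can be deleted without destroying chordality (using that a chordal graph stays chordal after removing an edge $uv$ precisely when $N(u) \cap N(v)$ induces a clique), which would contradict the minimality of $H$; controlling $N_H(u) \cap N_H(v)$ once fill edges may chain is exactly where the bookkeeping becomes delicate, and is precisely what the Parra--Scheffler route lets us sidestep.
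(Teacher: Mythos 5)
Your argument is correct, but note that the paper itself contains no proof of this statement: it is imported verbatim from the cited reference (Berry et al.\ 2010, Property on clique minimal separators and minimal triangulations) and used as a black box, so there is no in-paper argument to compare against. Your route via the Parra--Scheffler correspondence is sound: a clique minimal separator $S$ of $G$ is parallel to every minimal separator $T$, since $S\setminus T$ induces a clique, hence is connected and lies in a single component $D$ of $G-T$, giving $S\subseteq T\cup D$; maximality of the pairwise parallel family $\mathcal C$ with $H=G_{\mathcal C}$ then forces $S\in\mathcal C$, and the correspondence guarantees that the minimal separators of $G_{\mathcal C}$ are exactly the members of $\mathcal C$. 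The only point you gloss slightly is that admitting $\mathcal C\cup\{S\}$ as a \emph{pairwise} parallel family does use the symmetry of the parallel (non-crossing) relation on minimal separators --- you flag it as known, which is fine, but it is doing real work there. Your sketched self-contained alternative (a minimal triangulation adds no fill edge between distinct components of $G-S$ when $S$ is a clique separator, so the full components of $S$ in $G$ survive in $H$ and Lemma~\ref{lem:full-comp} applies) is essentially how the cited source argues, via the fact that minimal triangulations of $G$ are assembled from minimal triangulations of the blocks $G[C\cup S]$ of the clique-separator decomposition; so either route is acceptable, with Parra--Scheffler buying you brevity at the cost of a heavier black box.
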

A {\em triangulation} of $G = (V, E)$ is a chordal super graph $H = (V, E')$ satisfying $E \subseteq E'$.
A triangulation is {\em minimal} if there is no triangulation $H' = (V, E'')$ of $G$ such that $E \subseteq E'' \subset E'$.
Since every $n$-vertex chordal graph has $O(n)$ minimal separators, $G$ can have only $O(n)$ clique minimal separators.
Moreover, by Lemma~\ref{lem:hereditary}, every clique minimal separator found as the third outcome is also a clique minimal separator of $G$.
This implies the third outcome only occurs $O(n)$ times in the entire execution.
Therefore, the overall running time is still bounded by $2^{O(k)}n^{O(1)}$.

It would be worth mentioning that our algorithm also works in the vertex-weighted setting discussed in \cite{Hanaka:2019}.
\begin{corollary}
    Given a vertex-weighted graph $G$ with $w: V \to \mathbb N_{>0}$ and an integer $k$, one can determine $G$ has a minimal separator of weight at least $k$ in time $2^{O(k)}n^{O(1)}$. Moreover, if the answer is affirmative, the algorithm outputs such a minimal separator in the same running time. 
\end{corollary}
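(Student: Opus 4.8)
The plan is to rerun the same three-phase procedure as in the unweighted case, replacing ``cardinality'' by ``weight'' everywhere a threshold or a success condition is tested. Concretely, I would modify \textsc{FindSep} so that its invariant reads $S \subseteq U$ and $w(S) < k$; in Step~1 it stops (and emits a single bag) as soon as $w(U) \le 2k-1$; in Step~2-1 it reports ``YES'' when the minimal separator $S'$ it finds satisfies $w(S') \ge k$; and in Step~2-3-1 it reports ``YES'' exactly as before. The surrounding driver is unchanged: if \textsc{FindSep}$(G,\emptyset,k)$ returns a minimal separator, we are done (it is a minimal separator of $G$ of weight $\ge k$ by Lemma~\ref{lem:hereditary}); if it returns a tree decomposition, we run the dynamic program of Theorem~\ref{thm:tw-dp}; and if it returns two smaller induced subgraphs obtained from a clique minimal separator, we recurse on both. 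The recursion terminates for the same reason as before, since each call in Step~2-2 strictly decreases the number of vertices.

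The first thing to check is that all the structural ingredients survive verbatim, because they speak about minimal separators and triangulations and never about sizes: Lemma~\ref{lem:hereditary}, Lemma~\ref{lem:cms}, Lemma~\ref{lem:clique-bag}, and the fact that $G$ has only $O(n)$ clique minimal separators (via minimal triangulations) are all weight-oblivious. The only place where a numeric threshold enters is the weighted analogue of Lemma~\ref{lem:large-clique}: if the complete graph $K$ produced by \textsc{FindSep} does not induce a clique in $G$, then $K$ contains a minimal separator of $G$ of weight at least $k$. I would reprove it exactly as before: taking the last graph $G_j$ along the search path in which $G_j[K]$ is not a clique and the offending non-adjacent pair $u,v$, every vertex of $K \setminus S_j$ is a common neighbor of $u$ and $v$ in $G_j$; now $w(S_j) < k$ while $w(K) \ge 2k$ since Step~1 did not stop, so $u$ and $v$ have common neighbors of total weight $w(K)-w(S_j) > k$, and since every minimal $u,v$-separator contains all of them, $G_j$ — hence $G$, by Lemma~\ref{lem:hereditary} — has a minimal separator of weight $\ge k$ lying inside $K$; this argument is constructive.

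Next I would observe that the treewidth guarantee is unchanged. Because every weight is a positive integer, $w(X) \ge |X|$ for every vertex set $X$; hence a leaf bag of weight $\le 2k-1$ still has at most $2k-1$ vertices and a separator bag $S'$ of weight $< k$ still has fewer than $k$ vertices, so the tree decomposition of $G$ reassembled from the recursion still has width at most $2k-2$. On such a decomposition we invoke the \emph{weighted} version of the dynamic program of Theorem~\ref{thm:tw-dp}: the algorithm of Hanaka et al.\ tracks, for each bag and each admissible ``coloring'' of that bag into a bounded number of roles, the best partial solution, and the only change needed in order to maximize weight rather than cardinality is to accumulate $w$ instead of counting — this is exactly the vertex-weighted variant already discussed in \cite{Hanaka:2019}, and it still runs in time $2^{O(w)}n^{O(1)}=2^{O(k)}n^{O(1)}$. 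Combining this with the facts that \textsc{FindSep} runs in polynomial time and that the third outcome of Lemma~\ref{lem:findsep} corresponds to a clique minimal separator of $G$ and therefore occurs only $O(n)$ times over the whole execution, the total running time stays $2^{O(k)}n^{O(1)}$, and a witnessing separator is produced because each phase is constructive.

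The point worth flagging is \emph{why} a direct adaptation, rather than a black-box reduction, is the right move. The natural reduction — blow each vertex $v$ up into a clique of $w(v)$ twin vertices — even if one argues that it preserves the minimal-separator structure weight-for-size, produces a graph with $\sum_v w(v)$ vertices, which is exponential in the input length when the weights are encoded in binary, so the unweighted $2^{O(k)}n^{O(1)}$ algorithm would only yield a $2^{O(k)}\left(\sum_v w(v)\right)^{O(1)}$ bound. Thus the genuinely load-bearing task is to confirm that nothing in the three-phase algorithm or in Theorem~\ref{thm:tw-dp} secretly relied on ``large weight'' coinciding with ``many vertices''; once the weighted large-clique lemma and the weighted tree-decomposition dynamic program are in place, everything else goes through unchanged, and that verification — especially re-auditing the dynamic program of \cite{Hanaka:2019} in the weighted setting — is the main, if routine, obstacle.
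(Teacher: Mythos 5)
Your proposal is correct and matches the paper's intent exactly: the paper states this corollary without a separate proof, simply remarking that the \textsc{FindSep}-plus-DP algorithm carries over to the vertex-weighted setting of \cite{Hanaka:2019}, which is precisely the adaptation you carry out (weight thresholds in the invariant and in steps 1 and 2-1, the weighted analogue of Lemma~\ref{lem:large-clique}, the observation that positive integer weights keep the width bound $2k-2$, and the weighted tree-decomposition DP). Your verification is sound and, if anything, more detailed than what the paper provides.
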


\subsection{Lower bound based on ETH}
Impagliazzo et al. \cite{Impagliazzo:2001} proved that there is no $2^{o(n+m)}$-time algorithm for {\sc $3$-CNFSAT} assuming that the Exponential Time Hypothesis~\cite{Impagliazzo:2001}, where $n$ is the number of variables and $m$ is the number of clauses in the input formula. 
From this starting point, a lot of complexity lower bounds have been established in the literature. (See \cite{Lokshtanov:2011}, for example.)

In this subsection, we verify that a known chain of reductions from {\sc $3$-CNFSAT} to {\sc Maximum Minimal Separator} proves Theorem~\ref{thm:eth}.
To show Theorem~\ref{thm:eth}, under ETH, it suffices to show that there is no $2^{o(n)}$-time algorithm for {\sc Maximum Minimal Separator} as $k \le n$.

We begin with the following well-known results.

\begin{lemma}[e.g. Theorem 14.6 in \cite{Cygan:2015}]\label{lem:vc}
    Let $\phi$ be a 3-CNF formula with $n$ variables and $m$ clauses.
    Then, there is a polynomial-time algorithm that constructs a graph $G$ with $O(n+m)$ vertices and $O(n+m)$ edges such that
    $\phi$ is satisfiable if and only if $G$ has a dominating set of size at most $k$ for some $k$.
\end{lemma}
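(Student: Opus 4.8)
The plan is to give an explicit, purely local gadget reduction from \textsc{$3$-CNFSAT} to \textsc{Dominating Set} in which each variable and each clause contributes only a constant number of vertices and edges; the resulting instance then automatically has $O(n+m)$ vertices and $O(n+m)$ edges and is produced in linear time.

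First I would attach to each variable $x_i$ a \emph{variable gadget}: three vertices $t_i, f_i, g_i$ forming a triangle, where $t_i$ and $f_i$ represent the two literals of $x_i$ and $g_i$ is an auxiliary vertex whose \emph{only} neighbors are $t_i$ and $f_i$. The point of $g_i$ is that $N[g_i] = \{t_i, f_i, g_i\}$, so every dominating set must pick a vertex out of this three-element set; as the $n$ gadgets are vertex-disjoint, this already forces at least $n$ vertices into any dominating set. Next, for each clause $C_j$ I would add one \emph{clause vertex} $c_j$ and make it adjacent exactly to the literal vertices occurring in $C_j$ (to $t_i$ when $x_i \in C_j$ and to $f_i$ when $\neg x_i \in C_j$), with no other incident edges. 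Finally I would set $k := n$ and output $(G, k)$. The vertex count is $3n + m$, and the edge set consists of the $3n$ triangle edges together with at most $3m$ clause--literal edges, so both are $O(n+m)$ and the construction clearly runs in polynomial (indeed linear) time.

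The correctness argument then proceeds in the two usual directions. From a satisfying assignment $\alpha$ I would take $D = \{t_i : \alpha(x_i) = \text{true}\} \cup \{f_i : \alpha(x_i) = \text{false}\}$: it has exactly $n$ vertices; the chosen literal vertex of gadget $i$ dominates $g_i$ and, along a triangle edge, the other literal vertex; and each $c_j$ is dominated because $C_j$ has a literal true under $\alpha$ whose vertex lies in $D$ and is joined to $c_j$. Conversely, given a dominating set $D$ with $|D| \le n$, the pairwise disjointness of the sets $N[g_i]$ forces $|D| = n$ with precisely one vertex of each triangle and nothing else in $D$; in particular no $c_j$ lies in $D$, so each $c_j$ must be dominated through one of its literal-vertex neighbors, which therefore belongs to $D$. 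Assigning $x_i$ according to whichever of $t_i, f_i$ lies in $D$ (and arbitrarily if $g_i \in D$) then satisfies every clause, which yields the equivalence.

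I do not expect a deep obstacle here; the point that really needs care is checking that the tight budget $k = n$ forces the intended structure. Because the $n$ closed neighborhoods $N[g_i]$ are pairwise disjoint and each must meet the dominating set, any dominating set of size at most $n$ has size exactly $n$ and lies entirely within the variable triangles; the clause vertices are then dominated only ``from outside'', which is exactly the condition that encodes ``$C_j$ is satisfied'' and makes the reverse direction go through. A secondary, minor point worth a sentence is that one should assume the input formula has been cleaned of trivial clauses (empty clauses, or a clause containing some literal together with its negation): the argument uses only that $N[c_j] \setminus \{c_j\}$ is the set of literal vertices of $C_j$, so such clauses cause no real trouble, but it is cleaner to remove them beforehand.
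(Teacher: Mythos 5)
Your reduction is correct: the disjoint closed neighborhoods $N[g_i]$ force the budget $k=n$ to select exactly one vertex per triangle, and the clause vertices are then dominated precisely when the induced assignment satisfies every clause, giving a graph with $3n+m$ vertices and at most $3n+3m$ edges as required. The paper itself gives no proof, only a citation to the textbook of Cygan et al., and your gadget construction is exactly the standard linear-size reduction that reference uses, so there is nothing to add.
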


As we have mentioned in Section~\ref{sec:cobipartite}, the problem of finding a minimum cardinality independent dominating set problem is NP-hard even on bipartite graphs.
More specifically, Corneil and Perl~\cite{Corneil:1984} proved the following lemma.

\begin{lemma}[\cite{Corneil:1984}]
    Let $G$ be a graph with $n$ vertices and $m$ edges.
    Let $G'$ be a bipartite graph obtained from $G$ by replacing each edge with a path of five vertices.
    Then, $G$ has a dominating set of size at most $k$ if and only if $G'$ has an independent dominating set of size at most $m + k$.
\end{lemma}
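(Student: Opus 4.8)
The plan is to prove both implications by exploiting the explicit structure of the path gadgets together with the fact that an independent dominating set is exactly a maximal independent set. Fix notation: for each edge $e = \{u,v\}$ of $G$ write the inserted path as $u - p_e - q_e - r_e - v$, so that $V(G')$ consists of the original vertices $V$ together with three internal vertices $p_e, q_e, r_e$ per edge, and $|V(G')| = n + 3m$. First I would record two structural facts used repeatedly: (a) $G'$ is bipartite, since colouring every original vertex and every $q_e$ white and every $p_e, r_e$ black is a proper $2$-colouring (no two original vertices are adjacent, as all $G$-edges are subdivided); and (b) in any maximal independent set $D'$ of $G'$, each gadget contains at least one internal vertex, because $q_e$ must be dominated and its only neighbours are $p_e$ and $r_e$.

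For the forward direction, given a dominating set $D$ of $G$ with $|D| \le k$, I would set $D' := D \cup \{\text{one internal vertex per gadget}\}$, choosing the internal vertex of $e=\{u,v\}$ by a case distinction on $|\{u,v\} \cap D|$: if both endpoints lie in $D$, or neither does, pick the middle vertex $q_e$; if exactly one endpoint is in $D$, pick the internal vertex adjacent to the endpoint \emph{not} in $D$. Since all $G$-edges are subdivided, $D \subseteq V$ is independent in $G'$, and each chosen internal vertex is non-adjacent to any endpoint in $D$, so $D'$ is independent; a short check of the three cases shows every $p_e, q_e, r_e$ is dominated, and every original vertex $w \notin D$ is dominated through an incident edge $\{w,x\}$ with $x \in D$, where we placed the $w$-side internal vertex. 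As we add exactly one vertex per gadget, $|D'| = |D| + m \le k + m$.

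The converse is harder, and the crux is to avoid a factor-two loss when charging undominated vertices against surplus internal vertices. Given a maximal independent set $D'$ of $G'$ with $|D'| \le m + k$, I would classify each gadget by its internal trace $D' \cap \{p_e, q_e, r_e\}$; a finite enumeration (independence on the path plus domination of $p_e, q_e, r_e$) shows the only possibilities are exactly $\{q_e\}$, $\{p_e\}$, $\{r_e\}$, or $\{p_e, r_e\}$, and that the last type forces both endpoints out of $D'$. Writing $s$ for the number of gadgets of type $\{p_e, r_e\}$, the total number of internal vertices in $D'$ is $m + s$, so $B := D' \cap V$ satisfies $|B| \le (m+k)-(m+s) = k - s$. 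The obstruction is the set $W$ of original vertices not dominated by $B$ in $G$: I would show each $w \in W$ satisfies $w \notin D'$ and is dominated in $G'$ only through the $w$-side internal vertex of some incident gadget, and that this witnessing gadget is necessarily of type $\{p_e, r_e\}$ (its other endpoint, a $G$-neighbour of $w$, cannot lie in $D'$ either). The observation that defeats the factor-two worry is that for such a gadget $e = \{w,x\}$, adding \emph{either} endpoint to the dominating set dominates both $w$ and $x$ in $G$ via the edge $e$; hence it suffices to add one arbitrary endpoint per gadget of type $\{p_e,r_e\}$. This yields $D := B \cup \{\text{one endpoint per type-}\{p_e,r_e\}\text{ gadget}\}$ with $|D| \le (k-s)+s = k$, and $D$ dominates $G$ because every vertex outside $W$ is dominated by $B$ while every vertex of $W$ has a type-$\{p_e,r_e\}$ witnessing gadget that contributed an endpoint.

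The step I expect to be the main obstacle is precisely this converse charging argument: the naive choice $D := D' \cap V$ can fail to dominate, and a single surplus internal vertex may have to account for two undominated endpoints, so driving the count down to $k$ rather than $k+s$ hinges on the structural fact that the two endpoints of a type-$\{p_e,r_e\}$ gadget are $G$-adjacent, so one chosen endpoint covers both. The remaining pieces---the finite enumeration of the four gadget types and the routine independence and domination checks in the forward construction---are mechanical once the case distinctions are fixed.
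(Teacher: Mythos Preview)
The paper does not give its own proof of this lemma: it is quoted verbatim from Corneil and Perl~\cite{Corneil:1984} and used as a black box in the ETH lower-bound chain, so there is nothing in the paper to compare your argument against.

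Your proof is correct. The forward construction (one internal vertex per gadget, chosen by the case split on $|\{u,v\}\cap D|$) yields an independent dominating set of size exactly $|D|+m$, and your verification of independence and domination in each of the three cases is sound. For the converse, the enumeration of the four possible internal traces is complete once one uses both independence on the path $p_e\!-\!q_e\!-\!r_e$ and domination of $r_e$ (which is what forces type $\{p_e\}$ to have $v\in D'$); your counting $|D'\cap V|\le k-s$ then follows. The key step---showing that every $w\in W$ has a witnessing gadget of type $\{p_e,r_e\}$---is justified exactly as you indicate: $w\in W$ means no $G$-neighbour of $w$ lies in $B=D'\cap V$, so in particular the far endpoint $x\notin D'$, and then domination of $r_e$ forces $r_e\in D'$. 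Your final observation, that adding a \emph{single} endpoint of a type-$\{p_e,r_e\}$ gadget dominates \emph{both} of its endpoints in $G$ because they are $G$-adjacent, is precisely what prevents the charge from doubling and brings the bound down to $k$ rather than $k+s$. This is the standard argument for the Corneil--Perl reduction, reconstructed correctly.
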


Plugging the above chain of reductions into the polynomial-time reduction described in Section~\ref{sec:cobipartite},
we can construct in polynomial time a graph $G$ with $O(n+m)$ vertices from an instance of {\sc $3$-CNFSAT} with $n$ variables and $m$ clauses such that $\phi$ is satisfiable if and only if $G$ has a minimal separator of size at least $k$ for some $k$.
This means that there is no $2^{o(n)}n^{O(1)}$-time algorithm for {\sc Maximum Minimal Separator}, where $n$ is the number of vertices of the input graph, unless ETH fails.
Therefore, Theorem~\ref{thm:eth} follows.

\subsection{Kernel lower bound}\label{ssec:kernel}

For a parameterized (decision) problem $P$ with instance $I$ and parameter $k$, a {\em kernelization} is a polynomial-time preprocessing that outputs an equivalent pair $(I', k')$, such that $(I, k)$ is a YES-instance if and only if so is $(I', k')$ and $|I'| + k' \le f(k)$ for some computable function $f$. 
In particular, if $f$ is polynomial, it is called a {\em polynomial kernelization}.

If $G$ has more than one connected component, then every minimal separator is contained in its components as a minimal separator.
Hence, there is a trivial OR-composition \cite{Bodlaender:2009} from {\sc Maximum Minimal Separator} into itself on not necessarily connected graphs.
With the result of Bodlanender et al. \cite{Bodlaender:2009}, a polynomial kernelization is unlikely to exist, and hence Theorem~\ref{thm:kernel-general} follows.

This argument essentially requires that input graphs are disconnected.
One may expect that if the input graph is restricted to be connected, there could be a polynomial kernelization.
However, such an expectation is unlikely.
To see this, we show the following lemma.

\begin{lemma}\label{lem:con-composition}
    Let $G_1, G_2, \ldots, G_t$ be graphs with $G_i = (V_i, E_i)$ for each $1 \le i \le t$ and let $H = (V, E)$ be the graph obtained from these $t$ graphs by adding a universal vertex $r$, that is, $r$ is adjacent to every vertex in $V_1 \cup \cdots \cup V_t$.
    Then, at least one of these $t$ graphs has a minimal separator of size at least $k$ if and only if $H$ has a minimal separator of size at least $k + 1$.
\end{lemma}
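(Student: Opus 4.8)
The strategy is to show that any minimal separator of $H$ that is ``interesting'' (size $\ge k+1$) must contain the universal vertex $r$, and then identify minimal separators of $H$ containing $r$ with minimal separators of the individual $G_i$'s. Throughout I will use the folklore characterization (Lemma~\ref{lem:full-comp}): a set $S$ is a minimal separator exactly when there are two full components associated to it.

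First I would prove the ``if'' direction. Suppose some $G_i$ has a minimal separator $S_i$ of size at least $k$, witnessed by two full components $C_1$ and $C_2$ of $G_i[V_i \setminus S_i]$. I claim $S := S_i \cup \{r\}$ is a minimal separator of $H$ of size at least $k+1$. Indeed, in $H[V \setminus S]$ the sets $C_1$ and $C_2$ are still connected (no edges inside $G_i$ were removed) and there is no edge between them (any such edge in $H$ would either be in $G_i$, impossible since $S_i$ separated them, or would pass through $r$, which is removed). So $C_1$ and $C_2$ are distinct components of $H[V \setminus S]$. Their neighborhoods in $H$: every vertex of $C_1$ is adjacent to $r$, and has all its $G_i$-neighbors outside $C_1$ lying in $S_i$ (since $N_{G_i}(C_1) = S_i$), so $N_H(C_1) = S_i \cup \{r\} = S$; likewise $N_H(C_2) = S$. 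Hence $C_1, C_2$ are two full components associated to $S$, and by Lemma~\ref{lem:full-comp}, $S$ is a minimal separator of $H$ of size $|S_i| + 1 \ge k+1$.

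For the ``only if'' direction, let $S$ be a minimal separator of $H$ with $|S| \ge k+1$, and let $C_1, C_2$ be two full components associated to it. The main obstacle is ruling out that $r \notin S$: if $r$ were not in $S$, then $r$ lies in one of the components, say $r \in C_1$; since $r$ is adjacent to every vertex of $V \setminus \{r\}$, the component $C_1$ would contain every vertex of $V \setminus S$ not isolated from it, forcing $H[V \setminus S]$ to be connected, contradicting that $C_2$ is a separate full component. (Slightly more carefully: every vertex outside $S$ is adjacent to $r$, hence in the same component as $r$, so $H[V\setminus S]$ has only one component — contradiction.) Therefore $r \in S$. Now consider $S' := S \setminus \{r\}$. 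Each full component $C_j$ is a connected subgraph of $H - r$, and since $C_1, C_2$ are nonadjacent and there are no edges between distinct $G_i$'s except through $r$, each $C_j$ lies entirely within a single $G_{i(j)}$; moreover both must lie in the \emph{same} $G_i$: if $C_1 \subseteq G_a$ and $C_2 \subseteq G_b$ with $a \ne b$, then $N_H(C_1) \subseteq V_a \cup \{r\}$ and $N_H(C_2) \subseteq V_b \cup \{r\}$, but $N_H(C_1) = N_H(C_2) = S$ forces $S \subseteq \{r\}$, contradicting $|S| \ge k+1 \ge 2$. So $C_1, C_2 \subseteq V_i$ for a common $i$. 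Finally, set $S_i := S' \cap V_i = S \setminus \{r\}$ restricted appropriately; one checks $N_{G_i}(C_1) = N_H(C_1) \setminus \{r\} = S \setminus \{r\}$ (and likewise for $C_2$), so $C_1, C_2$ are two full components of $G_i$ associated to $S \setminus \{r\}$, which is therefore a minimal separator of $G_i$ of size $|S| - 1 \ge k$. This completes the equivalence.

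I expect the delicate point to be the bookkeeping in the ``only if'' direction — specifically arguing that $r \in S$ and that both witnessing full components live in the same $G_i$ — whereas the ``if'' direction is a routine verification via Lemma~\ref{lem:full-comp}.
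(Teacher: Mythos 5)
Your proof is correct and follows essentially the same route as the paper: both directions use the full-component characterization, with $r$ forced into any minimal separator of $H$ and the witnessing full components (and $S\setminus\{r\}$) confined to a single $V_i$. Your argument that both components lie in the same $V_i$ (via $N_H(C_1)=N_H(C_2)=S$ and $|S|\ge 2$) is a minor variant of the paper's step showing $S\setminus\{r\}\subseteq V_i$ through its neighbors in $C_1$, and like the paper it implicitly assumes $k\ge 1$.
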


\begin{proof}
    Suppose that at least one of $t$ graphs, say $G_1$, contains a minimal separator $S$ of size at least $k$.
    Then, there are two full components $C_1$ and $C_2$ in $G[V_1 \setminus S]$.
    Since $r$ has a neighbor both in $C_1$ and $C_2$ and there are no edges between them, $C_1$ and $C_2$ are full components of $H[V \setminus (S \cup \{r\})]$.
    Thus $H$ has a minimal separator of size at least $k + 1$.
    
    Conversely, $H$ has a minimal separator $S$ of size at least $k + 1$.
    Since $r$ is a universal vertex, it must be contained in $S$.
    Let $C_1$ and $C_2$ be full components of $H[V \setminus S]$.
    Since $r$ is a separator of $H$, $C_1$ is contained in $V_i$ for some $1 \le i \le t$.
    Moreover, $S \setminus \{r\}$ is also contained in $V_i$ as every vertex in $S \setminus \{r\}$ has a neighbor in $C_1$.
    We apply this argument to $C_2$ and hence $C_1$, $C_2$, and $S$ are all contained in $V_i$.
    Therefore, $S \setminus \{r\}$ is a minimal separator of $G_i$ with full components $C_1$ and $C_2$.
\end{proof}

Therefore, we have Theorem~\ref{thm:kernel-connected}.

\section{Concluding remarks}\label{sec:conclusion}
In this paper, we investigate the computational complexity of {\sc Maximum Minimal Separator} with respect to graph classes.
More concretely, we show that the problem is NP-complete even if the input is restricted to cubic planar bipartite, co-bipartite, and line graphs.
We also give an FPT algorithm for finding a minimal separator of size at least parameter $k$, whose exponential dependency is asymptotically optimal under the Exponential Time Hypothesis (ETH).

There are several interesting questions related to our results.
It is worth noting that the graph classes indicated by blue color in Figure~\ref{fig:complexity} has polynomially many minimal separators.
We have known that graph classes that have exponentially many minimal separators but for which {\sc Maximum Minimal Separator} can be solved in polynomial time are of bounded-treewidth proved by \cite{Hanaka:2019}.
More generally, the property of being a minimal separator can be expressed by a formula in MSO$_1$:
The property of being an $a$-$b$ minimal separator can be expressed as:
\begin{align*}
\phi_{a,b}(S) &:= \exists A, B \subseteq V.(a\in A \land b\in B\land A \cap B = \emptyset \land A \cap S = \emptyset \land B \cap S = \emptyset  \\
&\land {\bf fullcomp}(S, A) \land {\bf fullcomp}(S, B)
\land [\forall a' \in A, \forall b' \in B.(\neg{\bf adj}(a', b'))],\\
{\bf fullcomp}(S, C) &:= {\bf comp}(S, C) \land \forall v\in S.(\exists w\in C.({\bf adj}(v,w))),\\
{\bf comp}(S, C) &:= {\bf conn}(C) \land \forall v \in V.(v \notin S \cup C \implies \forall w \in C.(\neg{\bf adj}(v, w))),
\end{align*}
where {\bf conn}$(X)$ is the predicate that is true if and only if $G[X]$ is connected, and then the property of being a minimal separator can be expressed as:
\begin{eqnarray*}
    \phi(S) := \exists a, b \in V.(a\neq b \land \phi_{a,b}(S)).
\end{eqnarray*}
Therefore, {\sc Maximum Minimal Separator} is fixed-parameter tractable parameterized by cliquewidth via Courcelle's theorem for bounded-cliquewidth graphs~\cite{Courcelle:2000}.
It would be interesting to seek non-trivial graph classes having exponentially many minimal separators but for which {\sc Maximum Minimal Separator} can be solved in polynomial time, which could give a new insight for problems related to minimal separators, such as {\sc Treewidth} and {\sc Minimum Fill-in}.

Another stimulating open problem would be the applicability of Skodinis's FPT algorithms parameterized by the size of a maximum minimal separator.
He showed that several NP-hard problems, such as {\sc Maximum Independent Set} and {\sc Graph Coloring} can be solved in time $f(k)n^{O(1)}$ if every minimal separator of the input graph has size at most $k$ \cite{Skodinis:1999}.
He also showed that {\sc Hamiltonian Circuit} is NP-complete even on graphs having minimal separators of size at most three.
It would be interesting to draw a complexity-theoretic boundary of problems that are tractable on bounded-treewidth graphs but are intractable on graphs having only bounded-size minimal separators.

 \bibliographystyle{plain}

\bibliography{ref}
\end{document}